\documentclass[a4paper,UKenglish,cleveref, autoref,thm-restate,authorcolumns]{lipics-v2019}



\usepackage[classfont=bold,funcfont=roman]{complexity} 

\usepackage{microtype,xspace,wrapfig,multicol} 
\usepackage[textsize=tiny,color=lightgray]{todonotes} 
\usepackage[normalem]{ulem} 



\usetikzlibrary{patterns,snakes}

\newcommand{\y}{\boldsymbol{y}}

\newcommand{\calT}{\mathcal{T}}
\newcommand{\calC}{\mathcal{C}}

\newcommand{\calB}{\mathcal{B}}
\newcommand{\Nset}{\mathbb{N}}
\newcommand{\Zset}{\mathbb{Z}}
\newcommand{\Hset}{\mathbb{H}}
\newcommand{\Rset}{\mathbb{R}}

\newcommand{\img}{\mathrm{Im}}

\newcommand{\gac}[2]{#1\!\!\uparrow\!\!#2}

\bibliographystyle{plainurl}

\title{Limitations on counting in Boolean circuits and self-assembly} 

\author{Tristan Stérin}{Hamilton Institute and Department of Computer Science, Maynooth University \and \url{https://dna.hamilton.ie/tsterin/}}{tristan.sterin@mu.ie}{https://orcid.org/0000-0002-2649-3718}{Research supported by European Research Council (ERC) under the European Union’s Horizon
2020 research and innovation programme (grant agreement No 772766, Active-DNA project), and Science Foundation
Ireland (SFI) under Grant number 18/ERCS/5746.}
\author{Damien Woods}{Hamilton Institute and Department of Computer Science, Maynooth University \and \url{https://dna.hamilton.ie}}{damien.woods@mu.ie}{}{Research supported by European Research Council (ERC) under the European Union’s Horizon
2020 research and innovation programme (grant agreement No 772766, Active-DNA project), and Science Foundation
Ireland (SFI) under Grant number 18/ERCS/5746.}
\authorrunning{T. Stérin and D. Woods} 

\Copyright{Tristan Stérin and Damien Woods} 

\ccsdesc[500]{Theory of computation~Models of computation}

\keywords{Algorithmic self-assembly, Boolean circuits, computational complexity. } 

\category{} 

\relatedversion{} 

\supplement{}


\acknowledgements{We thank Jarkko Kari for pointing us to key results on Boolean circuits and functions. We thank Christopher-Lloyd Simon for introducing us to the theory of ramification degrees and their application to quasi-bijections. We thank Constantine Evans for helpful discussions on self-assembled counters, and 
Dave Doty and Erik Winfree for discussions on IBCs over the years.}

\nolinenumbers 

\hideLIPIcs  
%

\begin{document}

\maketitle

\begin{abstract}
In self-assembly, a $k$-counter is a tile set that grows a horizontal ruler from left to right, containing $k$ columns each of which encodes a distinct binary string.  Counters have been fundamental objects of study in a wide range of theoretical models of tile assembly, molecular robotics and thermodynamics-based self-assembly due to their construction capabilities  using few tile types, time-efficiency of growth and combinatorial structure.  Here, we define a Boolean circuit model, called $n$-wire local railway circuits, where $n$ parallel wires are straddled by Boolean gates, each with matching fanin/fanout strictly less than $n$, and we show that such a model can not count to $2^n$ nor implement any so-called odd bijective nor quasi-bijective function.  We then define a class of self-assembly systems that includes theoretically interesting and experimentally-implemented systems that compute $n$-bit functions and count layer-by-layer.  We apply our Boolean circuit result to show that those self-assembly systems can not count to $2^n$.  This explains why the experimentally implemented iterated Boolean circuit model of tile assembly can not count to $2^n$, yet some previously studied tile system do.  Our work points the way to understanding the kinds of features required from self-assembly and Boolean circuits to implement maximal counters. 
\end{abstract}


\section{Introduction}
\begin{figure}
\includegraphics[width=\textwidth]{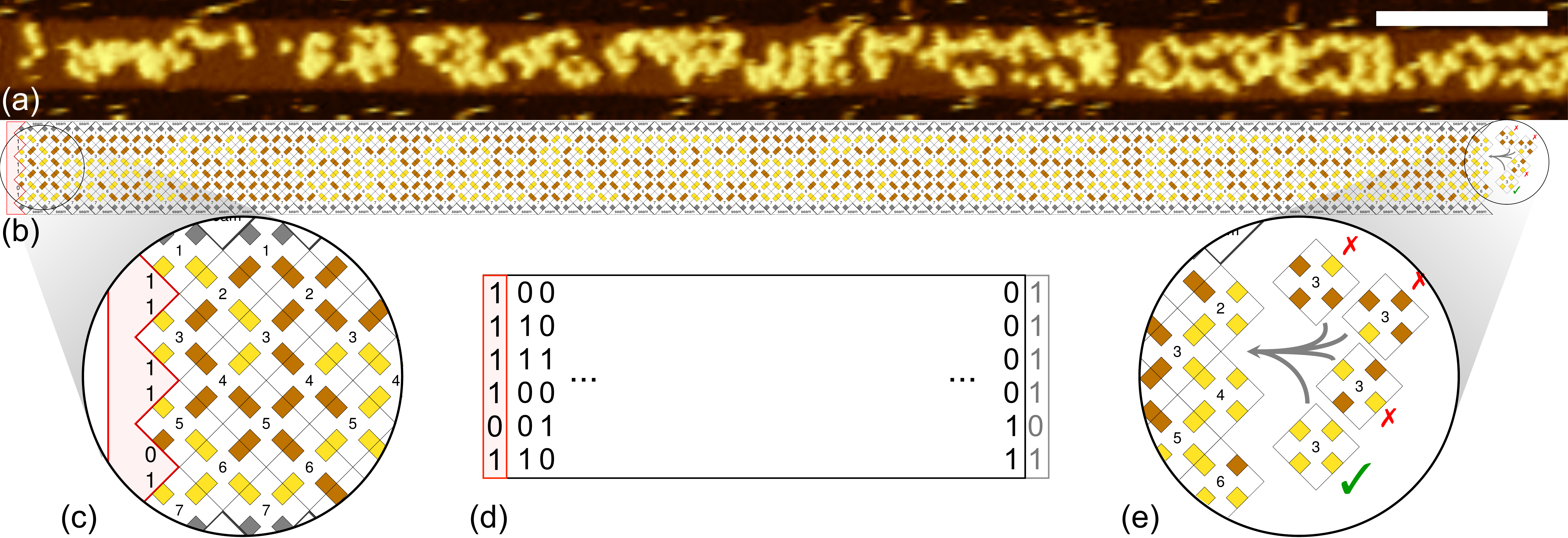}
\caption{A tile-based 63-counter in the 6-bit Iterated Boolean Circuit (IBC) model. In Section~\ref{sec:sa} we prove there is no 6-bit IBC 64-counter. 
  (a)~Atomic force microscope image of a self-assembled DNA tile 63-counter. Starting from a seed structure on the left-hand side (not shown),  DNA tiles attach to grow the assembled structure from left to right. 1-bits are labelled with yellow `circles' (streptavidin protein), 0s are unlabeled  (brown).    
Scale bar 100~nm.
  Data taken from~\cite{ibc}.
 (b)~Schematic showing intended assembled  tiled structure. 
 (c)~Zoom-in detail of (b), showing
  the first (seed) layer with the input bit sequence (from top to bottom) 111101, 
where yellow tile-glues denote 1 and brown denote 0.
 (d)~Abstract schematic of the 63-counter: 
 the 6-bit input 111101 appears on the left,
 followed by 62 distinct sequences, followed by 111101 again on the right.  
 (e)~Zoom-in of attaching tiles on right-hand side of a partially assembled structure, the attachment of a tile computes a function from two bits to two bits, and the entire tile set encodes the 63-counter algorithm. 
}
\label{fig:motivation}
\end{figure}

Both from the theoretical and the experimental points of view, counting is considered a fundamental building block for algorithmic self-assembly. 
On the theoretical side, it was established early in the field of algorithmic self-assembly that counters are a tile-efficient method to build a fixed-length ruler. Once one can make a ruler, it can be used to (efficiently) build many larger geometric shapes. For example, using an input structure containing  $O(\log n)$ square tile types, an additional (mere) constant number of tile types can then be used to first make a ruler and then an $n \times n$ square. 
Or by using a single-tile seed, a size $\Theta(\log n / \log \log n)$ tile set can go on to build $n \times n$ square in optimal expected time $\Theta(n)$~\cite{rothemund2000program,CGM04,AdChGoHu01}. These ideas easily generalise beyond squares to a wide class of more complicated geometric shapes \cite{SolWin07}. 
On some models, the combinatorial structure of tile/monomer-type efficient counters can be ``loose enough''  to allow highly-parallel construction~\cite{CGM04,nubots}, 
yet in others can be ``tight enough'' to enable large thermodynamically-stable structures~\cite{tbn}.
Counters were also used to build complex circuit patterns \cite{DBLP:conf/dna/CookRW03}, universal constructions in self-assembly \cite{USA,2HAMIU}, and as a benchmark for new self-assembly models \cite{DotPatReiSchSum10,patitz2018resiliency,fu2012self,patitz2011exact}. Hence, counters, and binary counters in particular, are fundamental to the theory of algorithmic self-assembly.\footnote{In contrast though, it should be noted that there are efficient geometry-inspired, and cellular-automata inspired,  constructions for building of shapes and or patterns that do not use counters~\cite{beckerCA}. Although they are  not as tile-type efficient as counters, they bring a more geometric, rather than counter-like information-based, flavour to shape construction.}

Experimentally, there has been a reasonable amount of effort dedicated to implementing counters~\cite{barish2005two,barish2009information,evans2014crystals,ibc}. 
An experimental piece of work \cite{ibc} (Figure~\ref{fig:motivation}), 
defined a Boolean circuit model of self-assembly, called \textit{iterated Boolean circuits} (IBCs), see Figure~\ref{fig:circuits}(a).  The model was expressive enough to permit programming of a wide range of 6-bit computations, and physical enough to permit their molecular implementation using DNA self-assembly. 
When generalised beyond 6-bit inputs to arbitrary inputs of any length $n \in \Nset$ the model is Turing universal~\cite{ibc}. 
However, despite its computational capabilities, the authors of~\cite{ibc} did not manage to find, by hand nor by computer search, any circuit that is a $2^n$ counter, or   \textit{maximal counter} meaning in the 6-bit case an iterated circuit that iterates through $2^6 = 64$ distinct bit strings before looping forever. 
Since programming requires some ingenuity, and since the search space for these circuits is huge,\footnote{There are $2^{44}$ possible 6-bit IBCs, and that number goes down to around $10^{10}$ when symmetries are taken into account.} it remained unclear whether such maximal binary counters were permitted by the model or not.
In this paper we prove they are not, and more generally give similar results on a class of Boolean circuits called \textit{railway circuits} and certain classes of self-assembly systems. 

Considering Boolean circuits, it is known since \cite{DBLP:conf/icalp/Toffoli80}, in the context of reversible computing, that adding pass-through input bits (i.e. input bits that only copy their value to the ouput) to reversible Boolean gates prevent them from implementing \textit{odd bijections}. It is essentially that result that will prevent railway circuits from implementing maximal bijective counters. However, other tools are needed to deal with \textit{quasi-bijective} maximal counters which, we show, is the only other family (besides bijections) of maximal counters.

\subsection{Results}
In Section~\ref{sec:circuits}, we define a Boolean circuit model called \textit{$n$-wire  local railway circuits}. 
An $n$-wire railway circuit consists of $n$ wires that run in straight parallel lines, with gates that straddle multiple adjacent wires (see Figure~\ref{fig:circuits}(b)) such that each gate has its fanin equal to its fanout. 
Gates are \emph{local} in the sense that no gate may straddle all $n$ wires.  
There is no restriction on the depth of these circuits.  
Railway circuits are a generalisation of IBCs and allow more possibilities for gate placement and wiring between those gates, 
yet they are restrictive enough to model a wide variety of self-assembly systems. 
Building on previous work on reversible circuits \cite{DBLP:conf/icalp/Toffoli80, DBLP:journals/corr/Xu15e, DBLP:conf/rc/BoykettKS16} and the notion of ramification degree of a function \cite{10.5555/1506267, bergeron_labelle_leroux_1997,Joyal1981UneTC}, we show that $n$-wire  local railway circuits cannot implement $2^n$ counters:
\begin{restatable}{theorem}{thmain}
\label{th:main}
For all $n > 0$, there is no local $n$-wire railway circuit that implements a $2^n$-counter.
\end{restatable}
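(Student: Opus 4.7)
\textbf{Proof plan for Theorem~\ref{th:main}.}

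The plan is to argue by cases on the structure of a hypothetical $2^n$-counter $F \colon \B^n \to \B^n$, handling each case with a different technique. First I would establish a structural dichotomy. If $F$ is a $2^n$-counter then, starting from some seed $x_0$, the orbit $x_0, F(x_0), \ldots, F^{2^n-1}(x_0)$ enumerates all of $\B^n$, and $F^{2^n}(x_0)$ must re-enter this list at some index $i \in \{0, \ldots, 2^n-1\}$. If $i = 0$ then $F$ is a bijective cyclic permutation of $\B^n$, and because a $2^n$-cycle has sign $(-1)^{2^n-1} = -1$ this permutation is odd. If $i \geq 1$ then $F$ is quasi-bijective: $|\img(F)| = 2^n - 1$, with exactly one empty fiber (at $x_0$) and one fiber of size $2$ (namely $\{F^{i-1}(x_0), F^{2^n-1}(x_0)\}$).

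For the bijective case, I would invoke the classical Toffoli-style parity argument. Each local gate acts on $\B^n$ as $\sigma \times \mathrm{id}_{n-k}$ for some $\sigma \colon \B^k \to \B^k$ with $k < n$. If $F$ is bijective then, since $\B^n$ is finite, every constituent gate must itself be bijective, and the sign of $\sigma \times \mathrm{id}_{n-k}$ as a permutation of $\B^n$ is $\sgn(\sigma)^{2^{n-k}} = +1$, because $n - k \geq 1$ forces $2^{n-k}$ to be even. Hence $F$ is a product of even permutations and is therefore itself even, contradicting the fact that it is odd.

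For the quasi-bijective case, I would invoke the theory of ramification degrees (per the acknowledgements, drawing on Joyal's species theory and Bergeron--Labelle--Leroux). The target is a $\Zset/2\Zset$-valued invariant $\rho$ on endofunctions of $\B^n$ satisfying: (a)~$\rho(g) = 0$ for every local gate $g$, since the functional graph of $\sigma \times \mathrm{id}_{n-k}$ consists of $2^{n-k} \geq 2$ isomorphic copies of the graph of $\sigma$, so every ramification feature appears an even number of times; (b)~$\rho$ adds modulo~$2$ along any composition arising from a local railway circuit, so $\rho(F) = 0$; and (c)~$\rho(F) = 1$ for any quasi-bijective $F$, because the coupled empty fiber and size-$2$ fiber constitute an irreducible odd ``ramification event''. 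Combining (a)--(c) contradicts $F$ being quasi-bijective.

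The main obstacle is clause~(b). Naive candidates such as ``multiset of fiber sizes modulo $2$'' or a cycle-plus-tree pseudo-sign combining $\sgn(F|_{\text{ev-img}})$ with the parity of the number of tree edges all fail, because composing two local gates can easily collapse their symmetric fiber (or functional-graph) structures into odd-parity configurations — one can already check by hand on $n = 3$ that two gates with fiber-size multiset $\{2,2,1,1,1,1,0,0\}$ compose to a function with multiset $\{4,2,1,1,0,0,0,0\}$, breaking every such mod-$2$ count. The right invariant must therefore encode the full alternating combinatorial structure of the functional graph (cycles together with the attached forest of trees) in a way that is compatible with the $2^{n-k}$-fold block symmetry of each local gate under pullback. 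Constructing this ramification invariant from the Joyal/Bergeron--Labelle--Leroux framework --- effectively extending the sign character of $\Sp_{2^n}$ to the whole endofunction monoid of $\B^n$ in a way that respects the railway geometry --- is the technical core of the argument and is what powers the quasi-bijective case.
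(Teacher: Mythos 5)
Your structural dichotomy (bijection vs.\ quasi-bijection) and your bijective case match the paper exactly: Lemma~\ref{lem:bij} gives the dichotomy, and the paper's Lemma~\ref{lem:atom_restricted}(1) together with Corollaries~\ref{cor:kcycles} and \ref{cor:table} is precisely your Toffoli-style parity argument --- each local gate extends to $\sigma \times \mathrm{id}$ with $2^{n-k}$-fold repetition of the block, hence has even sign, and the $2^n$-cycle is odd. That half is fine.

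The quasi-bijective case is where you have a genuine gap, and you correctly sense it yourself: you are looking for a $\Zset/2\Zset$-valued invariant $\rho$ that (b) \emph{adds mod $2$} under composition along the railway, and you observe that every natural candidate fails. The reason is that no such additive invariant is needed --- and the additive-mod-2 framing is the wrong ansatz. The paper's key move is that the ramification degree $r(f) = m - \mathrm{card}(\img(f))$ is \emph{monotone non-decreasing} under composition: $r(f\circ g) \geq \max(r(f), r(g))$ (Lemma~\ref{lem:comp}, proved in one line from $\img(f\circ g)\subset \img(f)$ and $\img(f\circ g) = f(\img(g))$). A quasi-bijection has $r = 1$, so by monotonicity every atomic component would need $r \in \{0,1\}$. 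The complement to this is the paper's Lemma~\ref{lem:atom_restricted}(2): a local gate can never have $r = 1$, because if it did, the output coordinate $j$ where the doubled value and the missed value differ would have an odd number of $1$s in its row of the truth table, contradicting the fact that some input bit is ignored (which forces every output row to have column counts that are multiples of $2$). That forces every atomic component to be a bijection, so the composition is a bijection, contradicting $r = 1$. So your observation under (a) --- that the functional graph of a local gate splits into $2^{n-k}\geq 2$ isomorphic blocks --- is exactly the right intuition, but it should be used to rule out $r=1$ for a single gate, not to build a compositional character; the monotone inequality then does the rest. Your example (multiset $\{2,2,1,1,1,1,0,0\}$ composing to $\{4,2,1,1,0,0,0,0\}$) is consistent with this: $r$ goes from $2$ to $4$, i.e.\ it only grows, and since it never passes through $1$ the argument is unharmed. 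No extension of the sign character of $\Sp_{2^n}$ to the endofunction monoid is required.
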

More generally we show that no $n$-wire railway circuit implements Boolean functions  $f: \{0,1\}^n \rightarrow \{0,1\}^n$ that are  odd bijections or odd quasi-bijections (these terms are defined in Section~\ref{sec:circuits}).

We then apply these results to self-assembly in  Section~\ref{sec:sa}. 
We define a class of directed self-assembly systems that compute iterated/composed Boolean $n$-bit functions, layer-by-layer, and show that that class of self-assembly systems are simulated by railway circuits. 
Hence such systems cannot assemble maximal binary counters. 
This class includes $n$-bit IBC tile sets, hence we get: 
\begin{restatable}{theorem}{thmainibc}
\label{th:main:ibc}
For all $n \geq 3$, there is no $n$-bit IBC tile set that self-assembles a $2^n$ counter. \end{restatable}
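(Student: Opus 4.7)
The plan is to derive Theorem~\ref{th:main:ibc} as a direct corollary of Theorem~\ref{th:main} by exhibiting, for every $n$-bit IBC tile set $T$ with $n \geq 3$, a local $n$-wire railway circuit that computes the same column-update function $f_T : \{0,1\}^n \to \{0,1\}^n$ as $T$. A $2^n$-counter assembly produced by $T$ is exactly an iterated sequence $x, f_T(x), f_T^2(x), \ldots, f_T^{2^n-1}(x)$ of $2^n$ distinct $n$-bit strings, which forces $f_T$ to be a bijection whose cycle structure is a single $2^n$-cycle. Such an $f_T$ is precisely a $2^n$-counter in the railway-circuit sense, and Theorem~\ref{th:main} forbids its realization by a local $n$-wire railway circuit, producing the desired contradiction.

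First, I would recall (from Section~\ref{sec:sa}) the IBC tile-model definition, in which each column of the assembly carries $n$ bits on $n$ horizontal wires and the next column is produced by a fixed, deterministic arrangement of tile attachments, each tile realizing a 2-input/2-output Boolean function on two vertically-adjacent bits (cf.\ panel (e) of Figure~\ref{fig:motivation}). Reading the tile attachments in the horizontal growth direction gives a network of 2-to-2 gates straddling pairs of adjacent wires: each gate has fanin equal to fanout equal to $2$, and for $n \geq 3$ no such gate straddles all $n$ wires. This fits the definition of a local $n$-wire railway circuit from Section~\ref{sec:circuits} verbatim, and the threshold $n \geq 3$ is exactly what is needed for the locality hypothesis to hold (for $n = 2$ a single 2-to-2 gate would straddle every wire).

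Second, I would verify that the translation preserves semantics: for every $x \in \{0,1\}^n$, propagating $x$ through the railway circuit outputs the bitstring produced in the next column of $T$'s assembly when the current column encodes $x$. This is a direct bookkeeping step, composing the individual tile functions in the same spatial order as the tile attachments. Iteration of the resulting railway circuit then mirrors successive columns of the assembly, so a $2^n$-counter assembly produced by $T$ yields a railway circuit implementing a $2^n$-cycle bijection; Theorem~\ref{th:main} immediately gives the contradiction.

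The main obstacle is formalizing the directed, layer-by-layer character of IBC assemblies so that $f_T$ is well-defined and really equals the composition of the individual tile functions used in the railway-circuit construction. This is exactly the role of the broader class of directed self-assembly systems defined in Section~\ref{sec:sa}, within which IBC tile sets must be shown to fit; the simulation lemma alluded to in the excerpt does the heavy lifting. Once that framing is in place the reduction above is essentially syntactic, and Theorem~\ref{th:main:ibc} follows as an immediate instantiation of Theorem~\ref{th:main} on the railway-circuit image of the IBC.
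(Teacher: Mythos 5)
Your overall strategy — translate an $n$-bit IBC tile set into a local $n$-wire railway circuit and then invoke Theorem~\ref{th:main} — is exactly the route the paper takes (via Lemma~\ref{lem:ibc to railway}, Lemma~\ref{lem:tiles to local circuit}, and Theorem~\ref{thm:self-assembly:no odd bijection}). Two small points are worth fixing, though. First, your intermediate claim that a trace of $2^n$ distinct strings ``forces $f_T$ to be a bijection whose cycle structure is a single $2^n$-cycle'' is not correct: the trace having $2^n$ distinct elements only forces $f_T$ to be a bijection \emph{or} a quasi-bijection (Lemma~\ref{lem:bij}, Figure~\ref{fig:all_cycles}(b) shows the quasi-bijective ``rho'' shape), and Theorem~\ref{th:main} must rule out both. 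Your argument still lands because Theorem~\ref{th:main} covers both cases, but the step as written is wrong and would mislead a reader into thinking only the bijective case needs to be handled. Second, an $n$-bit IBC does not consist solely of 2-to-2 gates: the top and bottom positions carry 1-to-1 gates and the seam positions carry 0-to-0 ($\epsilon$-glue) gates (Example~\ref{eg:ibc}); what matters is that every gate has matching fanin/fanout $\leq 2 < n$, which is why the locality hypothesis of the railway-circuit model is satisfied for $n\geq 3$. With those corrections, the translation and the appeal to Theorem~\ref{th:main} are as in the paper, modulo the formal simulation machinery (Definitions~\ref{def:cleanly} and~\ref{def:assemble a layer function}, Lemma~\ref{lem:tiles to local circuit}) that you rightly identify as doing the heavy lifting.
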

While the layer-by-layer class of tile sets is wide enough to include an experimentally implemented IBC tile set~\cite{ibc}, and certain \textit{zigzag} systems (see Section~\ref{sec:sa}),
we also find that, from a self-assembly point of view, it is quite a restrictive class. 
Indeed building maximal counters is achievable through small, and quite reasonable, modifications to that class, 
which, in turn, highlight improvements that can be made to railway circuits to enable them to maximally count. 
Hence, this paper outlines some design principles that one \emph{should not follow} when concerned with designing maximal binary counters. 
One take home message is that in order to have an $n$-bit tile set that computes a maximal $2^n$ counter, layer-by-layer, then one should exploit some property that violates our notion of simulation by railway circuits: for example by having some tiles with fanout not equal to fanin.

\subsection{Future work}
We defined  $n$-wire local railway circuits to specifically model certain kinds of self-assembly systems.   
We leave as future work to characterise the exact family of Boolean circuits for which Theorem~\ref{th:main} holds.
That family is certainly larger than local railway circuits (for example, it would presumably include railway circuits that have gates that straddle up to $n-1$ {\em non-adjacent} wires) and goes beyond the scope of the type of circuits disscussed in \cite{DBLP:conf/icalp/Toffoli80, DBLP:journals/corr/Xu15e, DBLP:conf/rc/BoykettKS16}, but also provably does not contain the kinds of {\em railway-like} circuits that simulate the maximal $2^n$ counters from the self-assembly literature discussed in Sections~\ref{sec:zigzig} and~\ref{sec:zigzag}. 

Another direction is to find the most general class of self-assembly system for which something like Theorem~\ref{th:main:ibc} or Theorem~\ref{thm:self-assembly:no odd bijection} holds. 
Classes of self-assembly systems that are not  handled by our techniques include both undirected systems (that exploit nondeterminism in non-trivial ways to produce multiple final assemblies) and systems that do not grow in an obvious layer-by-layer fashion.
This would include systems that vary their growth pattern depending on the state of a partially grown counter structure. 
One approach is to attempt to find a more general class of circuits than railway circuits that models such general self-assembly systems.
However, it seems that a different approach might be more profitable as it is not obvious how to map such systems to a clean Boolean circuit architecture. 
  
We leave as open work to explore how our results on self-assembly generalise to higher, even or odd, alphabet sizes beyond the binary alphabet explored here.  Existing literature on reversible circuits offers pointers as they characterize the ability of local Boolean gates to implement odd/even bijections when alphabet sizes are larger than two~\cite{Boykett2016strongly}.


\newcommand{\ORgate}{\textsc{Or}}
\newcommand{\ANDgate}{\textsc{And}}
\newcommand{\NOTgate}{\textsc{Not}}
\newcommand{\XORgate}{\textsc{Xor}}

\section{Counting with $n$-wire local railway circuits}\label{sec:circuits}
\begin{figure}[t]
\begin{subfigure}{0.4\textwidth}
  \begin{tikzpicture}[scale=0.7]
\draw (0,0) -- (1,0);
\draw (1,-0.5) rectangle (2,0.5);
\node [left] at (0,0) {$x_0$};
\node at (1.5,0) {$g_0$};
\draw (2,0) -- (3,-0.5);
 \begin{scope}[shift={(0,-1.5)}]
\draw (0,0.25) -- (1,0.25);
\draw (0,-0.25) -- (1,-0.25);
\draw (1,-0.5) rectangle (2,0.5);
\node [left] at (0,0.25) {$x_1$};
\node [left] at (0,-0.25) {$x_2$};
\node at (1.5,0) {$g_1$};
\draw (2,0.25) -- (3,0.5);
\draw (2,-0.25) -- (3,-0.5);
\end{scope}
 \begin{scope}[shift={(0,-3)}]
\draw (0,0.25) -- (1,0.25);
\draw (0,-0.25) -- (1,-0.25);
\draw (1,-0.5) rectangle (2,0.5);
\node [left] at (0,0.25) {$x_3$};
\node [left] at (0,-0.25) {$x_4$};
\node at (1.5,0) {$g_2$};
\draw (2,0.25) -- (3,0.5);
\draw (2,-0.25) -- (3,-0.5);
\end{scope}
 \begin{scope}[shift={(0,-4.5)}]
\draw (0,0) -- (1,0);
\draw (1,-0.5) rectangle (2,0.5);
\node [left] at (0,0) {$x_5$};
\node at (1.5,0) {$g_3$};
\draw (2,0) -- (3,0.5);
\end{scope}
\begin{scope}[shift={(2,-3.75)}]
\begin{scope}[shift={(0,3)}]
\draw (1,-0.5) rectangle (2,0.5);
\node at (1.5,0) {$g_4$};
\draw (2,0.25) -- (3,0.25);
\draw (2,-0.25) -- (3,-0.25);
\node [right] at (3,0.25) {$y_0$};
\node [right] at (3,-0.25) {$y_1$};
\end{scope}
 \begin{scope}[shift={(0,1.5)}]
\draw (1,-0.5) rectangle (2,0.5);
\node at (1.5,0) {$g_5$};
\draw (2,0.25) -- (3,0.25);
\draw (2,-0.25) -- (3,-0.25);
\node [right] at (3,0.25) {$y_2$};
\node [right] at (3,-0.25) {$y_3$};
\end{scope}
 \begin{scope}[shift={(0,0)}]
\draw (1,-0.5) rectangle (2,0.5);
\node at (1.5,0) {$g_6$};
\draw (2,0.25) -- (3,0.25);
\draw (2,-0.25) -- (3,-0.25);
\node [right] at (3,0.25) {$y_4$};
\node [right] at (3,-0.25) {$y_5$};
\end{scope}
\end{scope}
\node at (-0.7,-7.3) {(a)};
\end{tikzpicture}
\end{subfigure}
\begin{subfigure}{0.48\textwidth}
  \begin{tikzpicture}[scale=0.48]
\draw (0,0) -- (1,0);
\draw (1,-0.5) rectangle (2,0.5);
\node [left] at (0,0) {$x_0$};
\draw (2,0) -- (8,0);
\draw (10,0) -- (15,0);
\node [left] at (0,-1.5) {$x_1$};
\draw (0,-1.5) -- (3,-1.5);
\draw (4,-1.5) -- (9,-1.5);
\draw (10,-1.5) -- (15,-1.5);
\node [left] at (0,-3) {$x_2$};
\draw (0,-3) -- (3,-3);
\draw (5,-3) -- (10,-3);
\draw (12,-3) -- (15,-3);
\node [left] at (0,-4.5) {$x_3$};
\draw (0,-4.5) -- (4,-4.5);
\draw (6,-4.5) -- (11,-4.5);
\draw (12,-4.5) -- (15,-4.5);
\node [left] at (0,-6) {$x_4$};
\draw (0,-6) -- (5,-6);
\draw (6,-6) -- (13,-6);
\node [left] at (0,-7.5) {$x_5$};
\draw (0,-7.5) -- (6,-7.5);
\draw (8,-7.5) -- (13,-7.5);
\node [left] at (16,0) {$y_0$};
\node [left] at (16,-1.5) {$y_1$};
\node [left] at (16,-3) {$y_2$};
\node [left] at (16,-4.5) {$y_3$};
\node [left] at (16,-6) {$y_4$};
\node [left] at (16,-7.5) {$y_5$};
\node at (1.5,0) {$g_0$};

\draw [
    thick,
    decoration={
        brace,
        mirror,
        raise=0.5cm
    },
    decorate
] (0,-9) -- (15,-9); 

\node [below] at (7.5, -10.3) {$f_{\mathcal{C}}$};

\draw [
    thick,
    decoration={
        brace,
        mirror,
        raise=0.5cm
    },
    decorate
] (0.7,-7.5) -- (2.3,-7.5); 

\node [below] at (1.5, -8.5) {$f_{g_0}\!\!\upharpoonright$};

\begin{scope}[shift={(2,-0.5)}]
\draw [
    thick,
    decoration={
        brace,
        mirror,
        raise=0.5cm
    },
    decorate
] (0.7,-7) -- (2.3,-7); 

\node [below] at (1.5, -8) {$f_{g_1}\!\!\upharpoonright$};
\begin{scope}[shift={(2,0)}]
\draw [
    thick,
    decoration={
        brace,
        mirror,
        raise=0.5cm
    },
    decorate
] (0.7,-7) -- (2.3,-7); 

\node [below] at (1.5, -8) {$f_{g_2}\!\!\upharpoonright$};
\begin{scope}[shift={(2,0)}]
\draw [
    thick,
    decoration={
        brace,
        mirror,
        raise=0.5cm
    },
    decorate
] (0.7,-7) -- (2.3,-7); 

\node [below] at (1.5, -8) {$f_{g_3}\!\!\upharpoonright$};
\begin{scope}[shift={(2,0)}]
\draw [
    thick,
    decoration={
        brace,
        mirror,
        raise=0.5cm
    },
    decorate
] (0.7,-7) -- (2.3,-7); 

\node [below] at (1.5, -8) {$f_{g_4}\!\!\upharpoonright$};
\begin{scope}[shift={(2,0)}]
\draw [
    thick,
    decoration={
        brace,
        mirror,
        raise=0.5cm
    },
    decorate
] (0.7,-7) -- (2.3,-7); 

\node [below] at (1.5, -8) {$f_{g_5}\!\!\upharpoonright$};
\begin{scope}[shift={(2,0)}]
\draw [
    thick,
    decoration={
        brace,
        mirror,
        raise=0.5cm
    },
    decorate
] (0.7,-7) -- (2.3,-7); 

\node [below] at (1.5, -8) {$f_{g_6}\!\!\upharpoonright$};
\end{scope}
\end{scope}
\end{scope}
\end{scope}
\end{scope}
\end{scope}

\filldraw (0.5,0) circle (3pt);
\filldraw (0.5,-3) circle (3pt);
\filldraw (0.5,-4.5) circle (3pt);
\filldraw (0.5,-6) circle (3pt);
\filldraw (0.5,-7.5) circle (3pt);
\filldraw (0.5,-1.5) circle (3pt);

\filldraw (2.5,0) circle (3pt);
\filldraw (2.5,-3) circle (3pt);
\filldraw (2.5,-4.5) circle (3pt);
\filldraw (2.5,-6) circle (3pt);
\filldraw (2.5,-7.5) circle (3pt);
\filldraw (2.5,-1.5) circle (3pt);

\begin{scope}[shift={(2,0)}]
\filldraw (2.5,0) circle (3pt);
\filldraw (2.5,-3) circle (3pt);
\filldraw (2.5,-4.5) circle (3pt);
\filldraw (2.5,-6) circle (3pt);
\filldraw (2.5,-7.5) circle (3pt);
\filldraw (2.5,-1.5) circle (3pt);
\begin{scope}[shift={(2,0)}]
\filldraw (2.5,0) circle (3pt);
\filldraw (2.5,-3) circle (3pt);
\filldraw (2.5,-4.5) circle (3pt);
\filldraw (2.5,-6) circle (3pt);
\filldraw (2.5,-7.5) circle (3pt);
\filldraw (2.5,-1.5) circle (3pt);
\begin{scope}[shift={(2,0)}]
\filldraw (2.5,0) circle (3pt);
\filldraw (2.5,-3) circle (3pt);
\filldraw (2.5,-4.5) circle (3pt);
\filldraw (2.5,-6) circle (3pt);
\filldraw (2.5,-7.5) circle (3pt);
\filldraw (2.5,-1.5) circle (3pt);
\begin{scope}[shift={(2,0)}]
\filldraw (2.5,0) circle (3pt);
\filldraw (2.5,-3) circle (3pt);
\filldraw (2.5,-4.5) circle (3pt);
\filldraw (2.5,-6) circle (3pt);
\filldraw (2.5,-7.5) circle (3pt);
\filldraw (2.5,-1.5) circle (3pt);
\begin{scope}[shift={(2,0)}]
\filldraw (2.5,0) circle (3pt);
\filldraw (2.5,-3) circle (3pt);
\filldraw (2.5,-4.5) circle (3pt);
\filldraw (2.5,-6) circle (3pt);
\filldraw (2.5,-7.5) circle (3pt);
\filldraw (2.5,-1.5) circle (3pt);
\begin{scope}[shift={(2,0)}]
\filldraw (2.5,0) circle (3pt);
\filldraw (2.5,-3) circle (3pt);
\filldraw (2.5,-4.5) circle (3pt);
\filldraw (2.5,-6) circle (3pt);
\filldraw (2.5,-7.5) circle (3pt);
\filldraw (2.5,-1.5) circle (3pt);
\end{scope}
\end{scope}
\end{scope}
\end{scope}
\end{scope}
\end{scope}

\begin{scope}[shift={(2,0)}]

\draw (2,0) -- (3,0);
\draw (0,-1.5) -- (1,-1.5);
\draw (1,-1) rectangle (2,-3.5);

\node at (1.5,-2.25) {$g_1$};
\draw (2,-1.5) -- (3,-1.5);

\begin{scope}[shift={(2,0)}]
\begin{scope}[shift={(0,-1.5)}]
\draw (0,-1.5) -- (1,-1.5);

\draw (2,-1.5) -- (3,-1.5);
\begin{scope}[shift={(0,-1.5)}]
\draw (0,-1.5) -- (1,-1.5);
\draw (1,-1) rectangle (2,-3.5);

\node at (1.5,-2.25) {$g_2$};
\draw (2,-1.5) -- (3,-1.5);
\begin{scope}[shift={(2,-1.5)}]
\draw (0,-1.5) -- (1,-1.5);

\draw (2,-1.5) -- (3,-1.5);
\begin{scope}[shift={(0,-1.5)}]
\draw (0,-1.5) -- (1,-1.5);
\draw (1,-2) rectangle (2,-1);

\node at (1.5,-1.5) {$g_3$};
\draw (2,-1.5) -- (3,-1.5);
\end{scope}
\end{scope}
\end{scope}
\end{scope}
\begin{scope}[shift={(4,0)}]
\begin{scope}[shift={(0,1.5)}]
\draw (0,-1.5) -- (1,-1.5);
\draw (1,-1) rectangle (2,-3.5);
\node at (1.5,-2.25) {$g_4$};
\draw (2,-1.5) -- (3,-1.5);
\draw (2,-3) -- (3,-3);
\end{scope}
\begin{scope}[shift={(2,-1.5)}]
\draw (0,-1.5) -- (1,-1.5);
\draw (1,-1) rectangle (2,-3.5);
\node at (1.5,-2.25) {$g_5$};
\draw (2,-1.5) -- (3,-1.5);
\draw (2,-3) -- (3,-3);
\begin{scope}[shift={(2,-3)}]
\draw (0,-1.5) -- (1,-1.5);
\draw (1,-1) rectangle (2,-3.5);
\node at (1.5,-2.25) {$g_6$};
\draw (2,-1.5) -- (3,-1.5);
\draw (2,-3) -- (3,-3);
\end{scope}
\end{scope}
\end{scope}
\end{scope}
\end{scope}
\node at (-0.7,-11) {(b)};
\end{tikzpicture}
\end{subfigure}
\caption{(a)~Iterated Boolean circuit (IBC) layer with inputs ($x_0,\ldots,x_5$), Boolean gates ($g_0,\ldots,g_6$) and  outputs ($y_0,\ldots,y_5$). The circuit computes on a 6-bit circuit input by iterating (repeating) this layer over and over. 
(b)~A railway circuit that simulates a 6-bit IBC.  The railway circuit uses $6$ wires, is of (horizontal) width $7$ and the decomposition of $f_\mathcal{C}$ into $7$ atomic components is shown with each component denoted $f_{g_0}\!\!\upharpoonright,f_{g_1}\!\!\upharpoonright$, etc. Black dots delimit sections. The railway circuit is local because none of its gates span all 6 wires.}
\label{fig:circuits}
\end{figure}


Let $n,k \in \Nset^+$. 

For $X=\{0,1,\dots,m-1\}$ and $f: X\to X$ let $\mathrm{Im}(f) = \{f(x) \, | \, x\in X\}$, 
denote the image of $f$, and let, for a finite set $Y$, $\mathrm{card}(Y)$ denote the cardinality of $Y$.

An $n$-wire, width-$k$, \emph{railway circuit} $\mathcal{C}$ is composed of $n$ parallel wires divided into $k$ sections each of width~$1$. Wires carry bits. 
A gate $g$ is specified by the tuple $(s,i,j,f_g)$ where $s \in \{ 0,1,\ldots,k-1$ is called the gate's section, 
where $0\leq i,j < n$,  
and where $f_g : \{0,1\}^{j-i+1}\to\{0,1\}^{j-i+1}$  is an arbitrary total function called the {\em gate function of $g$}.
 The gate $g=(s,i,j,f_g)$ is of width $1$, is located in section $s$, and there is exactly one gate per section.  
 The gate $g$ applies its function $f_g$ to the section's input wires between $i$ and $j$ (included). We use the notation $f_g \!\!\upharpoonright$ to refer to the extension of $f_g$ from $\{0,1\}^{j-i+1}$ to the domain $\{0,1\}^n$. The extended $f_g\!\!\upharpoonright$ simply passes through the bits on which it does not act (i.e. bits outside of the $[i,j]$ discrete interval as shown in Figure~\ref{fig:circuits}(b)). A railway circuit computes the {\em circuit function} $f_\mathcal{C}: \{0,1\}^n \to \{0,1\}^n$
by propagating its $n$ input bits from section to section and applying at each step the section's gate function to the appropriate subset of bits. In other words, we have $f_\mathcal{C} = (f_{g_{k-1}} \!\!\upharpoonright) \circ (f_{g_{k-2}} \!\!\upharpoonright) \circ \dots \circ (f_{g_1} \!\!\upharpoonright)\circ (f_{g_0} \!\!\upharpoonright)$ with $g_s$ being the gate in section $s$.
Figure~\ref{fig:circuits}(b) gives an example of a class of 6-wire railway circuits of width $7$. This example is implementing the 6-bit iterated Boolean circuit model~\cite{ibc} shown in Figure~\ref{fig:circuits}(a). 
A gate $g=(s,i,j,f_g)$ of an $n$-wire railway circuit $\mathcal{C}$ is \emph{local} if $j-i+1 < n$, i.e. the gate does not span all $n$ wires. The railway circuit $\mathcal{C}$ is \emph{local} if all of its gates are local. For instance, the railway circuit in Figure~\ref{fig:circuits}(b) is local\footnote{Note that locality does not prevent long distance influences in the circuit. If one concatenates three instances of the railway circuit in Figure~\ref{fig:circuits}(b), they obtain a new railway circuit where every input bit has an influence on every output bit: for instance, $x_0$ will influence $y_5$.}.

The following lemma defines the notion of  {\em atomic components}. Intuitively, it states that we can decompose the circuit function of a local railway circuit into a composition of functions that have properties crucial to our work.

\begin{lemma}[Atomic components]
\label{lem:pass}
Let $f_{\mathcal{C}}: \{0,1\}^n \to \{0,1\}^n$ be the circuit function of a local railway circuit $\mathcal{C}$ of width $k$. 
Then there are functions $f_0,f_1\dots,f_{k-1}$ mapping $\{0,1\}^n \to \{0,1\}^n$, called atomic components, with the following three properties:
\begin{equation}
\label{eq:one}
f_{\mathcal{C}} = f_{k-1} \circ f_{k-2} \circ \dots \circ f_0
\end{equation}
For all $0 \leq i < k$, there exists $0 \leq j < n$, such that, $\forall (x_0,\dots,x_{n-1})\in\{0,1\}^n$:
\begin{align}
\pi_j(f_i(x_0,\dots,x_{n-1})) &= x_j\label{eq:two}\\
\forall l \neq j,\, \pi_{l}(f_i(x_0,\dots,x_{j-1},0,\dots,x_{n-1})) &= \pi_{l}(f_i(x_0,\dots,x_{j-1},1,\dots,x_{n-1}))\label{eq:three}
\end{align}
where $\pi_j$ is the projection operator on the $j^{\text{th}}$ component.

\end{lemma}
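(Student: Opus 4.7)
The plan is to identify each atomic component $f_i$ with the extended gate function $f_{g_i}\!\!\upharpoonright$ of section $i$. With this choice, equation~(\ref{eq:one}) is nothing more than the definition of $f_\mathcal{C}$ quoted just before the lemma statement, i.e.\ $f_\mathcal{C} = (f_{g_{k-1}}\!\!\upharpoonright) \circ \dots \circ (f_{g_0}\!\!\upharpoonright)$. So the entire content of the lemma is packed into verifying the pass-through conditions~(\ref{eq:two}) and~(\ref{eq:three}) for each $f_i$, and this is where locality gets used.

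For each section $i$, write the corresponding gate as $g_i = (i, a, b, f_{g_i})$, where I rename the wire endpoints $a,b$ to avoid clashing with the pass-through index $j$ in the lemma. Because $\mathcal{C}$ is local, we have $b - a + 1 < n$, so the set $\{0,\dots,n-1\} \setminus \{a,a+1,\dots,b\}$ is non-empty; pick any $j^*$ in it and take this as the witness $j$ of the lemma. By the definition of the extension $f_{g_i}\!\!\upharpoonright$, which acts as the identity on every wire outside $[a,b]$, the $j^*$-th output of $f_i$ equals its $j^*$-th input, giving~(\ref{eq:two}).

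For~(\ref{eq:three}), fix any $l \neq j^*$ and split into two cases. If $l \notin [a,b]$, then $\pi_l(f_i(\mathbf{x})) = x_l$ is a pure pass-through and plainly does not depend on $x_{j^*}$. If $l \in [a,b]$, then $\pi_l(f_i(\mathbf{x}))$ is determined by the restriction $f_{g_i}(x_a,\dots,x_b)$; since $j^* \notin [a,b]$, the coordinate $x_{j^*}$ does not appear among the arguments of $f_{g_i}$, so flipping $x_{j^*}$ from $0$ to $1$ leaves $\pi_l(f_i(\mathbf{x}))$ unchanged. In either case the equality required by~(\ref{eq:three}) holds.

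I do not anticipate any substantive obstacle: the proof is an unpacking of definitions in which locality plays exactly the role of guaranteeing one unused wire per section, and the extension convention $f_{g_i}\!\!\upharpoonright$ automatically promotes such a wire into a coordinate that is both preserved and causally irrelevant to the other outputs of $f_i$. The only care needed is to avoid notational collision between the gate-endpoint index $j$ used in the definition of a gate and the pass-through index $j$ used in the lemma's conclusion, which I handle by the renaming to $a,b$ above.
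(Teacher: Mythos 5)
Your proof is correct and takes essentially the same approach as the paper: set $f_i = f_{g_i}\!\!\upharpoonright$ so that equation~(\ref{eq:one}) is immediate from the definition of $f_\mathcal{C}$, then use locality to find a wire outside the gate's span that serves as the pass-through index. The paper states the pass-through step more tersely while you carry out the two-case check for~(\ref{eq:three}) explicitly, but the argument is the same.
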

\begin{proof}

Let $f_i = f_{g_i}\!\!\upharpoonright$ where $g_i$ is the gate in section $i \leq n-1$. Then, by the definition of $f_{g_i}\!\!\upharpoonright$, we have $f_\mathcal{C} = (f_{g_{k-1}} \!\!\upharpoonright) \circ\dots \circ (f_{g_1} \!\!\upharpoonright)\circ (f_{g_0} \!\!\upharpoonright) = f_{k-1} \circ f_{k-2} \circ \dots \circ f_0$ which gives Equation~\eqref{eq:one}. 

Intuitively, Equations~\eqref{eq:two} and \eqref{eq:three} state that each function $f_i$ ignores at least one of its parameters $x_j$. 
Since $\mathcal{C}$ is local, for each section $i$ the gate $g_i$ is local, meaning there is a $j$ such that wire $j$ is pass-through on section $i$ of the circuit, yielding Equations~\eqref{eq:two} and~\eqref{eq:three}.
\end{proof}

In this paper, we are interested in iterating local  railway circuits in order to count. The $i^\text{th}$ iteration of a $n$-wire railway circuit $\mathcal{C}$ is written  $f^{i}_{\mathcal{C}}(x) =\underbrace{f_\mathcal{C}(f_\mathcal{C}(\ldots f_\mathcal{C}(x))}_{i \text{ times}}$, with the convention $f^0_{\mathcal{C}}(x) = x$. Since our input space is of size $2^n$, we know that the sequence of iterations of $\mathcal{C}$ on input $x$ is periodic of period at most $2^n$. We define the \emph{trace} of $x$ (relative to $\mathcal{C}$) to be the sequence $x,f_\mathcal{C}^1(x),f_\mathcal{C}^2(x),\ldots, f_\mathcal{C}^{2^n-1}(x)$, i.e. the first $2^n$ iterations of $\mathcal{C}$ on $x$. We now define what counters are:

\newpage
\begin{definition}[$k$-counter]
\label{def:counter}
An $n$-wire railway circuit is called a \emph{$k$-counter} if it meets the following two  conditions:
\begin{enumerate} 
\item For all inputs $x\in\{0,1\}^n$, the number of distinct elements in the trace of $x$ is less or equal to $k$. 
\item There exists at least one $x\in\{0,1\}^n$ such that the number of distinct elements in the trace of input $x$ is exactly $k$. 
\end{enumerate}
\end{definition}

Since this paper is mostly concerned with proving negative results we use a relatively relaxed notion of counter that does not {\em ab initio} preclude any 2-bit string-enumerator, including counters that use the `standard' ordering on binary strings, Gray code counters, etc.
Nevertheless, we show a negative result about local railway circuits:

\thmain*

The proof of Theorem~\ref{th:main} is given in Section~\ref{sec:main}. In order to prove Theorem~\ref{th:main} we are going to describe requirements on the structure of the circuit function of a $2^n$-counter (Lemma~\ref{lem:bij}). Then, we are going to prove limitations on the ability of atomic components $f_0,\dots,f_{k-1}$ to meet those requirements (Lemma~\ref{lem:atom_restricted}). Those limitations will be stable by composition, they will transfer to the entire circuit function $f_\mathcal{C} = f_{k-1} \circ f_{k-2} \circ \dots \circ f_0$ which will end the proof.


\begin{figure}
\center
\begin{tikzpicture}[inner sep=1pt,outer sep=0pt,minimum width=1, scale=0.657]
\def \n {12}
\pgfmathsetmacro{\halfn}{0.5*\n}
\pgfmathsetmacro{\halfnPlusTwo}{\halfn+2}
\def \radius {1.3cm}
\def \margin {8}

\node[] at (-1.3,-1.2) {(a)};

\foreach \s in {1,...,\halfn} {
  \node[draw,circle,fill=blue] at ({360/\n * (\s - 1)}:\radius) {};

  \draw[->, >=stealth] ({360/\n * (\s - 1)+\margin}:\radius) arc ({360/\n * (\s - 1)+\margin}:{360/\n * (\s)-\margin}:\radius);
}
\node[draw,circle,fill=blue] at ({360/\n * (\halfnPlusTwo - 2)}:\radius) {};

\node[draw,circle,fill=black,inner sep=0pt,outer sep=0pt,minimum width=0.2] at ({360/\n * (\halfnPlusTwo - 1.65)}:\radius) {};
\node[draw,circle,fill=black,inner sep=0pt,outer sep=0pt,minimum width=0.2] at ({360/\n * (\halfnPlusTwo - 1.5)}:\radius) {};
\node[draw,circle,fill=black,inner sep=0pt,outer sep=0pt,minimum width=0.2] at ({360/\n * (\halfnPlusTwo - 1.35)}:\radius) {};

\foreach \s in {\halfnPlusTwo,...,\n} {
  \node[draw,circle,fill=blue] at ({360/\n * (\s - 1)}:\radius) {};
  \draw[->, >=stealth] ({360/\n * (\s - 1)+\margin}:\radius) 
    arc ({360/\n * (\s - 1)+\margin}:{360/\n * (\s)-\margin}:\radius);
}

\begin{scope}[shift={(5.5,0)}]
\node[] at (-1.3,-1.2) {(b)};
\node[] at (1.55,0) {$\boldsymbol{y}$};

\foreach \s in {1,...,\halfn} {
  \node[draw,circle,fill=blue] at ({360/\n * (\s - 1)}:\radius) {};

  \draw[->, >=stealth] ({360/\n * (\s - 1)+\margin}:\radius) arc ({360/\n * (\s - 1)+\margin}:{360/\n * (\s)-\margin}:\radius);
}
\node[draw,circle,fill=blue] at ({360/\n * (\halfnPlusTwo - 2)}:\radius) {};

\node[draw,circle,fill=black,inner sep=0pt,outer sep=0pt,minimum width=0.2] at ({360/\n * (\halfnPlusTwo - 1.65)}:\radius) {};
\node[draw,circle,fill=black,inner sep=0pt,outer sep=0pt,minimum width=0.2] at ({360/\n * (\halfnPlusTwo - 1.5)}:\radius) {};
\node[draw,circle,fill=black,inner sep=0pt,outer sep=0pt,minimum width=0.2] at ({360/\n * (\halfnPlusTwo - 1.35)}:\radius) {};

\foreach \s in {\halfnPlusTwo,...,\n} {
  \node[draw,circle,fill=blue] at ({360/\n * (\s - 1)}:\radius) {};
}
\pgfmathsetmacro{\tmp}{\n-1}
\foreach \s in {\halfnPlusTwo,...,\tmp } {
  \draw[->, >=stealth] ({360/\n * (\s - 1)+\margin}:\radius) 
    arc ({360/\n * (\s - 1)+\margin}:{360/\n * (\s)-\margin}:\radius);
}
\draw[->, >=stealth] (0.9,-0.57) -- (-0.96,0.58) {};
\end{scope}
\end{tikzpicture}\vspace{-1.5ex}
\caption{Each node represents a distinct $n$-bit string, each arrow represents application of a circuit function. 
The figure captures the intuition that there are only two kinds of $2^n$-counter: (a)~a cycle that repeats all bit strings forever,
 and (b)~an almost-cycle, that (if we begin at $\y$) hits all strings once, and then cycles on a smaller loop. Note that $\y$ has no antecedent.
 (a) is a bijection, (b) is a quasi-bijection.}\label{fig:all_cycles}
\end{figure}

\begin{remark}
In the following, when we talk about a function, in general we will set its domain to be $\{0,1,\dots,m-1\}$ for some arbitrary $m\neq 0$. When we consider a circuit's function, the domain of the function  is the set of strings $\{0,1\}^n$ which we will sometimes (for convenience) identify with the set of numbers $\{0,1,\dots,2^n - 1\}$, i.e. $m= 2^n$.
\end{remark}

\begin{definition}[Quasi-bijection]
\label{def:qb}
A quasi-bijection $f:\{0,1,\dots,m-1\} \to \{0,1,\dots,m-1\}$ is such that there exists exactly one $y\in\{0,1,2,\dots,m-1\}$ reached by no antecedent: $\forall x\in\{0,1,2,\dots,m-1\},\, f(x) \neq y$.

\end{definition}

\begin{remark}
By the pigeonhole argument, because there is exactly one $y$ with no antecedent in a quasi-bijection $f$, there is also exactly one $z$ which is reached by exactly two antecedents. 
\end{remark}

\begin{restatable}{lemma}{lembij}
    \label{lem:bij}
    
    The circuit function of a $2^n$-counter on $\{0,1\}^n$ is either a bijection or a quasi-bijection.

\end{restatable}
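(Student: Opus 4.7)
The plan is to exploit the witness $x \in \{0,1\}^n$ guaranteed by condition~2 of Definition~\ref{def:counter}, whose trace consists of $2^n$ distinct elements. Since the trace lives in $\{0,1\}^n$ which has size exactly $2^n$, the sequence $x, f_\mathcal{C}(x), f_\mathcal{C}^2(x), \ldots, f_\mathcal{C}^{2^n-1}(x)$ is a permutation of the whole domain. This observation is the key leverage: every element of $\{0,1\}^n$ can be written as $f_\mathcal{C}^k(x)$ for a unique $k \in \{0, 1, \ldots, 2^n-1\}$.

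Next I would examine where the $(2^n)$-th iterate lands. Because $f_\mathcal{C}^{2^n}(x) \in \{0,1\}^n$, there exists a unique $i \in \{0,1,\ldots,2^n-1\}$ such that $f_\mathcal{C}^{2^n}(x) = f_\mathcal{C}^{i}(x)$. The proof then splits naturally into two cases. \textbf{Case $i=0$:} the orbit of $x$ closes into a cycle of length $2^n$; since every element $z = f_\mathcal{C}^k(x)$ has a unique preimage $f_\mathcal{C}^{k-1 \bmod 2^n}(x)$, $f_\mathcal{C}$ is a bijection. \textbf{Case $i > 0$:} I would compute the image $\mathrm{Im}(f_\mathcal{C})$ explicitly. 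Using that every $z \in \{0,1\}^n$ equals $f_\mathcal{C}^k(x)$ for some $k$, we get
\[
\mathrm{Im}(f_\mathcal{C}) = \{ f_\mathcal{C}^{k+1}(x) \,:\, 0 \leq k \leq 2^n-1 \} = \{f_\mathcal{C}^{1}(x),\ldots,f_\mathcal{C}^{2^n-1}(x),f_\mathcal{C}^{i}(x)\},
\]
and since $f_\mathcal{C}^{i}(x)$ is already in the first $2^n-1$ elements listed, $\mathrm{Im}(f_\mathcal{C}) = \{0,1\}^n \setminus \{x\}$. Hence $x$ is hit by no antecedent.

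To finish the quasi-bijection case I would apply a clean pigeonhole count: the sum of preimage counts over the image equals $|\{0,1\}^n| = 2^n$, while the image has size $2^n - 1$ and each image element has at least one preimage, so the excess is exactly $1$. Therefore exactly one element has two preimages and all others have one, and $x$ is the unique element with no antecedent, matching Definition~\ref{def:qb} precisely.

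I do not expect a significant obstacle: the argument is a finite-pigeonhole/orbit-structure argument once the witness trace is recognised as a permutation of the domain. The only subtlety is the case split on $i$ and making sure, in the $i>0$ branch, that $f_\mathcal{C}^i(x)$ genuinely coincides with one of $f_\mathcal{C}^1(x),\ldots,f_\mathcal{C}^{2^n-1}(x)$ (which holds because $1 \leq i \leq 2^n-1$), so that the computed image drops to size exactly $2^n-1$.
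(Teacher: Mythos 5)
Your proof is correct and makes rigorous the same orbit-structure argument that the paper communicates only informally by pointing to Figure~\ref{fig:all_cycles}: the witness trace is a permutation of the whole domain, and the iterate $f_\mathcal{C}^{2^n}(x)$ either returns to $x$ (giving a $2^n$-cycle, hence a bijection) or lands mid-orbit (making $x$ the unique element with no antecedent, hence a quasi-bijection). Since the paper's ``proof'' is essentially an appeal to a picture, your version is the same approach with the missing details filled in, and in particular your explicit identification $\mathrm{Im}(f_\mathcal{C}) = \{0,1\}^n \setminus \{x\}$ in the second case is exactly the content the figure is meant to convey.
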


\begin{proof}
Figure~\ref{fig:all_cycles} illustrates the only two behaviors that match the definition of a $2^n$-counter (Definition~\ref{def:counter}). The case of Figure~\ref{fig:all_cycles}(a) corresponds to the circuit function being a bijection: every $x\in \{0,1\}^n$ has exactly one antecedent. The case of Figure~\ref{fig:all_cycles}(b) corresponds to the circuit function being a quasi-bijection: there is only one $y\in \{0,1\}^n$ that has no antecedent.
\end{proof}

\section{Ramification degrees and theory of bijective functions}
\label{sec:thf}
In order to prove limitations on the expressiveness of atomic components (Lemma~\ref{lem:atom_restricted}) we will make use of the general theory of functions and bijective functions.

\subsection{Ramification degree of a function}

We make use of, in a self-contained manner, the notion of \emph{ramification degree} of a function which has been developed much further in the field of Analytic Combinatorics \cite{10.5555/1506267, bergeron_labelle_leroux_1997,Joyal1981UneTC}.

\begin{definition}[Ramification degree]
\label{def:ram}
Take any function $f:\{0,\dots,m-1\} \to \{0,\dots,m-1\}$.
For $i\in\{0,\dots,m-1\}$, define $a_i(f)$ to be the number of 
antecedents of $i$ under $f$: $a_i(f) = \mathrm{card}\!\left(\{ j \, | \, f(j) = i \}\right )$.
Define $r_i(f)$, the ramification degree of input $i$ under $f$, to be:
$ r_i(f) = \max \left ( 0, a_i(f) - 1 \right )$
Finally, define $r(f) = \sum_{i\in \{0,\dots,m-1\}} r_i(f)$ to be the ramification degree  of the function $f$.
\end{definition}

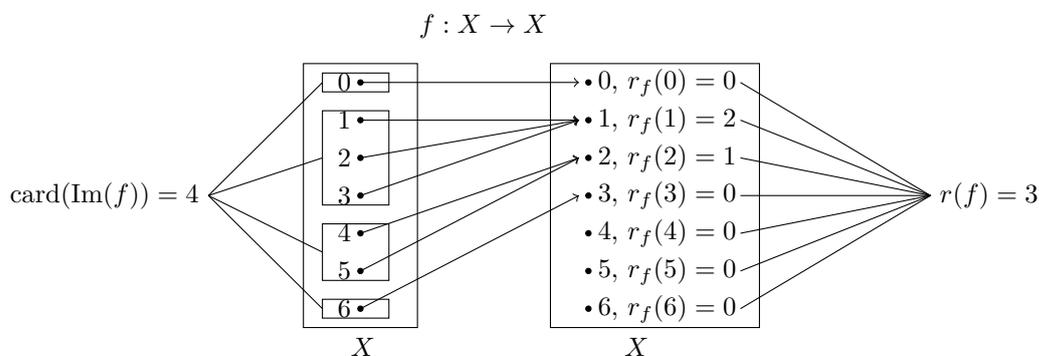
\begin{figure}[h!]
  \begin{tikzpicture}[scale=0.5]
  \begin{scope}[shift={(1,0)}]
\node [left] at (0,0) {$0$};
\filldraw (0,0) circle (2pt);
\node [left] at (0,-1) {$1$};
\filldraw (0,-1) circle (2pt);
\node [left] at (0,-2) {$2$};
\filldraw (0,-2) circle (2pt);
\node [left] at (0,-3) {$3$};
\filldraw (0,-3) circle (2pt);
\node [left] at (0,-4) {$4$};
\filldraw (0,-4) circle (2pt);
\node [left] at (0,-5) {$5$};
\filldraw (0,-5) circle (2pt);
\node [left] at (0,-6) {$6$};
\filldraw (0,-6) circle (2pt);
\draw (-1.5,0.5) rectangle (1.5,-6.5);
\draw (-1,0.25) rectangle (0.75,-0.25);

\draw (-1,-0.75) rectangle (0.75,-3.25);

\draw (-1,-3.75) rectangle (0.75,-5.25);

\draw (-1,-5.75) rectangle (0.75,-6.25);

\begin{scope}[shift={(6,0)}]
\node [right] at (0,0) {$0,\, r_f(0) = 0$};
\filldraw (0,0) circle (2pt);
\node [right] at (0,-1) {$1,\, r_f(1) = 2$};
\filldraw (0,-1) circle (2pt);
\node [right] at (0,-2) {$2,\, r_f(2) = 1$};
\filldraw (0,-2) circle (2pt);
\node [right] at (0,-3) {$3,\, r_f(3) = 0$};
\filldraw (0,-3) circle (2pt);
\node [right] at (0,-4) {$4,\, r_f(4) = 0$};
\filldraw (0,-4) circle (2pt);
\node [right] at (0,-5) {$5,\, r_f(5) = 0$};
\filldraw (0,-5) circle (2pt);
\node [right] at (0,-6) {$6,\, r_f(6) = 0$};
\filldraw (0,-6) circle (2pt);
\draw (-1,0.5) rectangle (4.5,-6.5);
\node [right] at (0.7,-7) {$X$};
\end{scope}

\draw [->] (0,0) -- (5.75,0);
\draw [->] (0,-1) -- (5.75,-1);
\draw [->] (0,-2) -- (5.75,-1);
\draw [->] (0,-3) -- (5.75,-1);
\draw [->] (0,-4) -- (5.75,-2);
\draw [->] (0,-5) -- (5.75,-2);
\draw [->] (0,-6) -- (5.75,-3);

\draw  (-1,0) -- (-4,-3);
\draw  (-1,-2) -- (-4,-3);
\draw  (-1,-4.5) -- (-4,-3);
\draw  (-1,-6) -- (-4,-3);

\node [right] at (-0.5,-7) {$X$};
\node [right] at (1.3,1.5) {$f:X \to X$};
\node [left] at (-4,-3) {$\text{card}(\text{Im}(f)) = 4$};

\draw  (10,0) -- (15,-3);
\draw  (10,-1) -- (15,-3);
\draw  (10,-2) -- (15,-3);
\draw  (10,-3) -- (15,-3);
\draw  (10,-4) -- (15,-3);
\draw  (10,-5) -- (15,-3);
\draw  (10,-6) -- (15,-3);
\node [right] at (15,-3) {$r(f) = 3$};

\end{scope}
\end{tikzpicture}
\caption{Illustration of the ramification degree for $f:X\to X$, $X=\{0,1,\dots,6\}$ and $m=7$. Also illustrates Lemma~\ref{lem:rr}: $r(f) +\text{card}(\text{Im}(f)) = \text{card}(X) = m = 7$ }\label{fig:potato}
\end{figure}

We have an elegant way to describe what $r(f)$ is counting:

\begin{lemma}
\label{lem:rr}
Let $X = \{0,\dots,m-1\}$ and $f:X \to X$ then
$$ r(f) = \emph{card}(X) - \emph{card} ( \emph{Im}(f) ) = m - \emph{card}(\emph{Im}(f))$$
\end{lemma}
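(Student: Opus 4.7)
The plan is to prove this as a straightforward double-counting identity by splitting the sum defining $r(f)$ according to whether or not an element of $X$ lies in the image of $f$. The only fact I need beyond the definitions is that $f$ is total, which gives $\sum_{i \in X} a_i(f) = m$, since summing the antecedents over all targets just counts each element of the domain once.

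First I would unpack the definition of $r_i(f) = \max(0, a_i(f)-1)$ and observe the dichotomy: if $i \notin \mathrm{Im}(f)$ then $a_i(f)=0$ and hence $r_i(f)=0$, whereas if $i \in \mathrm{Im}(f)$ then $a_i(f) \geq 1$ and hence $r_i(f) = a_i(f) - 1$. Splitting the sum accordingly,
\[
r(f) \;=\; \sum_{i \in X} r_i(f) \;=\; \sum_{i \in \mathrm{Im}(f)} \bigl(a_i(f)-1\bigr) \;=\; \Bigl(\sum_{i \in \mathrm{Im}(f)} a_i(f)\Bigr) - \mathrm{card}(\mathrm{Im}(f)).
\]

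Next I would extend the first sum back to all of $X$ at no cost, because $a_i(f)=0$ for $i \notin \mathrm{Im}(f)$. Then I use the totality of $f$: since every $j \in X$ contributes to exactly one $a_{f(j)}(f)$, we get $\sum_{i \in X} a_i(f) = \mathrm{card}(X) = m$. Substituting yields $r(f) = m - \mathrm{card}(\mathrm{Im}(f))$, as desired.

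There is no real obstacle here; the lemma is essentially a bookkeeping identity, and the only care needed is the $\max(0, \cdot)$ in the definition of $r_i(f)$, which is precisely what allows the restriction of the summation to $\mathrm{Im}(f)$ without altering the value. The example in Figure~\ref{fig:potato} can be used as a sanity check.
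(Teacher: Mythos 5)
Your proof is correct and follows essentially the same approach as the paper: both split the sum $\sum_{i\in X} r_i(f)$ according to whether $i$ lies in $\mathrm{Im}(f)$, use $r_i(f) = a_i(f)-1$ on the image and $r_i(f)=0$ off it, and invoke $\sum_{i\in X} a_i(f) = m$ from totality of $f$. The only difference is notational (the paper writes $\mathrm{card}(f^{-1}(i))$ for your $a_i(f)$ and introduces the set $J = \mathrm{Im}(f)$ explicitly).
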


\begin{proof}

We are going to show that $r(f) + \text{card}(\text{Im}(f)) = m$. Figure~\ref{fig:potato} gives a general example of the situation. For $i\in X$, consider the set $f^{-1}(i)$ of the antecedents of $i$ by $f$. By definition of $f^{-1}(i)$ we have $\sum_{i\in X} \text{card}(f^{-1}(i)) = \text{card}(X)$. Now, define $J$, the set of $i$ such that $f^{-1}(i) \neq \emptyset$. By definition of $r_i(f)$, we have $r_i(f) + 1 = \text{card}(f^{-1}(i))$ when $i\in J$ and $r_i(f) = 0$ otherwise.  By definition of $\text{Im}(f)$, we have $\text{card}(\text{Im}(f)) = \text{card}(J)$. Now we have
\begin{align*} 
r(f) + \text{card}(\text{Im}(f)) &= \sum_{i\in X}r_i(f) + \text{card}(J) = \sum_{i\in J}r_i(f) + \underbrace{\sum_{i \not \in J}r_i(f)}_{0} + \text{card}(J)\\
& = \sum_{i\in J}(\text{card}(f^{-1}(i)) - 1) + \text{card}(J) = \sum_{i\in J}\text{card}(f^{-1}(i)) - \text{card}(J) + \text{card}(J)\\
& = \sum_{i\in J}\text{card}(f^{-1}(i)) = \sum_{i\in X}\text{card}(f^{-1}(i)) = \text{card}(X) = m \qedhere
\end{align*}
\end{proof}

We can easily describe functions with ramification degree $0$ and $1$:

\begin{lemma}

Let $f:\{0,\dots,m-1\} \to \{0,\dots,m-1\}$ then we have the two following equivalences:
\begin{enumerate}
\item $r(f) = 0 \Leftrightarrow f$ is a bijection.
\item $r(f) = 1 \Leftrightarrow f$ is a quasi-bijection.
\end{enumerate}
\end{lemma}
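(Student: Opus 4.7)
The plan is to derive both equivalences as essentially immediate corollaries of Lemma~\ref{lem:rr}, which gives the identity $r(f) = m - \mathrm{card}(\mathrm{Im}(f))$. The only work is to translate the cardinality of the image into the bijection / quasi-bijection language of Definition~\ref{def:qb} and surrounding text.

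For equivalence~(1), I would argue: $r(f)=0$ holds iff $\mathrm{card}(\mathrm{Im}(f)) = m$ by Lemma~\ref{lem:rr}, iff $\mathrm{Im}(f) = \{0,\dots,m-1\}$, iff $f$ is surjective. Because $f$ is a function from a finite set to itself of equal cardinality, surjectivity is equivalent to bijectivity (the standard pigeonhole argument), giving the right-to-left and left-to-right directions together.

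For equivalence~(2), I would again apply Lemma~\ref{lem:rr}: $r(f)=1$ iff $\mathrm{card}(\mathrm{Im}(f)) = m-1$, iff there is exactly one element $y \in \{0,\dots,m-1\}\setminus \mathrm{Im}(f)$, which is precisely the condition $\exists!\, y$ such that $\forall x,\, f(x)\neq y$ appearing in Definition~\ref{def:qb}. So $r(f)=1$ iff $f$ is a quasi-bijection, as required. Since both directions are biconditional at each step, no further case analysis is needed and there is no genuine obstacle; the only care point is to remember that in the finite-same-cardinality setting, surjectivity and injectivity coincide, which lets us identify ``image of full size'' with ``bijection'' in part~(1).
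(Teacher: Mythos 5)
Your proof is correct and takes essentially the same route as the paper's: both rest on Lemma~\ref{lem:rr} to translate $r(f)$ into the size of $\mathrm{Im}(f)$, then invoke surjectivity-implies-bijectivity on a finite set for part~(1) and read off Definition~\ref{def:qb} for part~(2). The only cosmetic difference is that you make the biconditional chain explicit, whereas the paper argues only the forward direction (the reverse being implicit since each step reverses).
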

\begin{proof}
Let $X = \{0,\dots,m-1\}$.
\begin{enumerate}
\item If $r(f) = 0$, by Lemma~\ref{lem:rr} we have $\text{card}(\text{Im}(f)) = \text{card}(X)$. It means that $f$ is surjective, but $f$ has the same domain and range so $f$ is bijective.
\item If $r(f) = 1$, by Lemma~\ref{lem:rr} we have $\text{card}(\text{Im}(f)) = \text{card}(X)-1$. It means that there is exactly one $x\in X$ which is not reached by $f$ so $f$ is a quasi-bijection (see Definition~\ref{def:qb}).\qedhere
\end{enumerate}
\end{proof}

\noindent An important property of ramification degree is that it does not decrease under composition:

\begin{lemma}
\label{lem:comp}
Let $f,g \in \{0,\dots,m-1\} \to \{0,\dots,m-1\}$. Then we have:
$$ r(f\circ g) \geq \max(r(f),r(g)) $$
\end{lemma}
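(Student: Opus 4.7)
My plan is to use Lemma~\ref{lem:rr} to convert the ramification-degree inequality into an inequality about image cardinalities, which is then immediate from basic set-theoretic facts about composition. Concretely, writing $X = \{0,\dots,m-1\}$, Lemma~\ref{lem:rr} gives $r(h) = m - \mathrm{card}(\mathrm{Im}(h))$ for any $h : X \to X$, so the target inequality $r(f \circ g) \geq \max(r(f), r(g))$ is equivalent to
\[
\mathrm{card}(\mathrm{Im}(f \circ g)) \;\leq\; \min\bigl(\mathrm{card}(\mathrm{Im}(f)),\, \mathrm{card}(\mathrm{Im}(g))\bigr).
\]

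To establish this, I would observe that $\mathrm{Im}(f \circ g) = f(g(X)) = f(\mathrm{Im}(g))$. Because $\mathrm{Im}(g) \subseteq X$, applying $f$ yields $f(\mathrm{Im}(g)) \subseteq f(X) = \mathrm{Im}(f)$, giving the bound $\mathrm{card}(\mathrm{Im}(f \circ g)) \leq \mathrm{card}(\mathrm{Im}(f))$. For the other side, the image of any set under a function has cardinality no larger than the set itself, so $\mathrm{card}(f(\mathrm{Im}(g))) \leq \mathrm{card}(\mathrm{Im}(g))$, giving $\mathrm{card}(\mathrm{Im}(f \circ g)) \leq \mathrm{card}(\mathrm{Im}(g))$.

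Combining these two bounds and subtracting from $m$ yields $r(f\circ g) \geq \max(r(f), r(g))$, as required. There is no real obstacle here: the lemma is essentially a repackaging of the two elementary facts that composing functions cannot enlarge the image and that a function cannot enlarge the cardinality of any input set. The only subtlety worth flagging in the write-up is to apply Lemma~\ref{lem:rr} uniformly to $f$, $g$, and $f \circ g$ so the conversion between ramification degree and image cardinality is clean.
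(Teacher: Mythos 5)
Your proof is correct and follows the same approach as the paper: apply Lemma~\ref{lem:rr} to reduce the claim to $\mathrm{card}(\mathrm{Im}(f \circ g)) \leq \min(\mathrm{card}(\mathrm{Im}(f)), \mathrm{card}(\mathrm{Im}(g)))$, then establish each bound from $\mathrm{Im}(f\circ g) \subseteq \mathrm{Im}(f)$ and $\mathrm{Im}(f\circ g) = f(\mathrm{Im}(g))$ respectively.
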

\begin{proof}
Let $X = \{0,\dots,m-1\}$. By Lemma~\ref{lem:rr}, we wish to show that $\text{card}( \text{Im}(f\circ g) ) \leq \text{card}(\text{Im}(f))$ and $\text{card}( \text{Im}(f\circ g) ) \leq \text{card}(\text{Im}(g))$. Firstly, we have: $\text{Im}(f\circ g) \subset \text{Im}(f)$. Hence, $\text{card}( \text{Im}(f\circ g) ) \leq \text{card}(\text{Im}(f))$. Secondly, we have $\text{Im}(f\circ g) = \{ f(x) \, | \, x \in \text{Im}(g) \}$. It follows that $\text{card}( \text{Im}(f\circ g) ) \leq \text{card}(\text{Im}(g))$.
\end{proof}

From Lemma~\ref{lem:comp}, we immediately get the following:
\begin{corollary}
\label{cor:ram}
Let $f:\{0,\dots,m-1\} \to \{0,\dots,m-1\}$ such that there exists $f_0,f_1,\dots,f_{k-1}$ with $f = f_{k-1} \circ f_{k-2} \circ \dots \circ f_0$. Then:
\begin{enumerate}
\item $r(f) = 0 \Rightarrow \forall i,\, r(f_i) = 0$
\item $r(f) = 1 \Rightarrow \forall i,\, r(f_i) = 0 \text{  or  } r(f_i) = 1$
\end{enumerate}
\end{corollary}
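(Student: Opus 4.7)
The plan is to iterate Lemma~\ref{lem:comp} until one obtains a pointwise bound $r(f) \geq r(f_i)$ for every component $f_i$ of the decomposition $f = f_{k-1} \circ \dots \circ f_0$. From this bound the two items of the corollary both follow by a one-line argument, since $r$ is nonnegative by definition.

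Concretely, I would first prove by induction on $k$ (or simply by a direct grouping) that for any decomposition $f = f_{k-1} \circ f_{k-2} \circ \dots \circ f_0$ and any index $i \in \{0,\dots,k-1\}$, one has $r(f) \geq r(f_i)$. The base case $k=1$ is immediate. For the inductive step, fix an arbitrary $i$ and split the composition as $f = g \circ f_i \circ h$, where $g = f_{k-1} \circ \dots \circ f_{i+1}$ (taken to be the identity when $i=k-1$) and $h = f_{i-1} \circ \dots \circ f_0$ (the identity when $i=0$). Two successive applications of Lemma~\ref{lem:comp} then yield
\[
r(f) \;=\; r\bigl((g \circ f_i) \circ h\bigr) \;\geq\; r(g \circ f_i) \;\geq\; r(f_i).
\]
The edge cases ($i=0$ or $i=k-1$) are harmless because composition with the identity does not change the function, and $r(\mathrm{id}) = 0$ so Lemma~\ref{lem:comp} still applies.

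Given this pointwise inequality, the two items are immediate. For item~1, if $r(f)=0$ then $0 \leq r(f_i) \leq r(f) = 0$, so $r(f_i)=0$ for all $i$. For item~2, if $r(f)=1$ then $0 \leq r(f_i) \leq 1$, so $r(f_i) \in \{0,1\}$ for all $i$.

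I do not expect any real obstacle here: this corollary is essentially a bookkeeping observation on top of Lemma~\ref{lem:comp}. The only thing to be careful about is ensuring the grouping argument covers the first and last component; this is handled by allowing one of the two factors $g$ or $h$ to be the identity map.
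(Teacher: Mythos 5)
Your proposal is correct and is exactly the argument the paper intends: the paper states the corollary follows immediately from Lemma~\ref{lem:comp}, and your grouping $f = g \circ f_i \circ h$ with two applications of that lemma, followed by nonnegativity of $r$, is the natural way to make that immediacy explicit.
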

\begin{remark}
Said otherwise, you can only construct a bijection by composing bijections and you can only construct a quasi-bijection by composing bijections and quasi-bijections.
\end{remark}

\subsection{Bijective functions}

The following results about bijections are well-known group theoretic results which the reader can find, for instance, in \cite{rotman2012introduction}.
Here, we define a few notions that are required to state and prove out main results (Lemma~\ref{lem:atom_restricted} and Theorem~\ref{th:main}), with some details left to  
Appendix~\ref{app:proof_parity}. 

\begin{definition}[The symmetric group $\mathfrak{S}_m$]
The set of all bijections with domain and image$\{0,1,\dots,m-1\}$, is called $\mathfrak{S}_m$, the symmetric group of order $m$. It is a group for function composition $\circ$ and its neutral element is the identity.
\end{definition}

\begin{remark}
Note that the set of bijections on $\{0,1\}^n \to \{0,1\}^n$ corresponds to $\mathfrak{S}_{2^n}$.
\end{remark}

\begin{definition}[A swap]
A swap (or transposition) is a bijection $\tau \in \mathfrak{S}_m$ which leaves all its inputs invariant except for two that it swaps: i.e. there exists $i_0 \neq i_1\in \{0,1,\dots,m-1\}$ such that $\tau(i_0) = i_1$, $\tau(i_1) = i_0$ and $\tau(i) = i$ for all $i\not \in \{i_0,i_1\}$.
\end{definition}

\begin{remark}
A swap is its own inverse: $\tau \circ \tau = \text{Id}$.
\end{remark}

\begin{lemma}[Decomposition into swaps]
\label{lem:swde}
Take any $f\in \mathfrak{S}_m$. There exists $p$ swaps $\tau_0,\tau_1,\dots,\tau_{p-1}$ such that: $f = \tau_{p-1}\circ \dots \circ \tau_0$. We call $(\tau_0,\tau_1,\dots,\tau_{p-1})$ a swap-decomposition of $f$.
\end{lemma}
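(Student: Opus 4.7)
The plan is to prove this by strong induction on the number of non-fixed points $N(f) := \mathrm{card}(\{i \in \{0,\dots,m-1\} : f(i) \neq i\})$ of the bijection $f$. The base case $N(f) = 0$ is immediate since $f$ is then the identity, which is the empty composition ($p = 0$) of swaps. I would also note up front that $N(f) = 1$ is impossible: if a single point $i_0$ were moved, $f(i_0) = j \neq i_0$ with $j$ fixed would force $f(j) = j = f(i_0)$, contradicting injectivity. Hence in the inductive step we may assume $N(f) \geq 2$.

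For the inductive step, pick any $i_0$ with $f(i_0) \neq i_0$, set $i_1 := f(i_0)$, and let $\tau$ be the swap of $i_0$ and $i_1$. Define $g := \tau \circ f$; then $g(i_0) = \tau(i_1) = i_0$, so $i_0$ becomes a fixed point of $g$. A short verification shows $N(g) < N(f)$: precomposing the output of $f$ by $\tau$ can only alter fixed/non-fixed status at points whose $f$-image lies in $\{i_0,i_1\}$, that is at $i_0$ and at $f^{-1}(i_0)$; a case split on whether $f(i_1) = i_0$ (so $\{i_0,i_1\}$ both become fixed and $N(g) = N(f) - 2$) or $f(i_1) \neq i_0$ (so only $i_0$ changes status and $N(g) = N(f) - 1$) finishes the verification.

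By the induction hypothesis, $g$ admits a swap-decomposition $g = \tau_{p-1} \circ \cdots \circ \tau_0$. Since a swap is its own inverse (as noted in the preceding remark), $f = \tau \circ g = \tau \circ \tau_{p-1} \circ \cdots \circ \tau_0$, which is a swap-decomposition of $f$ of length $p+1$. The only genuine care needed is the bookkeeping of the case analysis on $N(g)$ and explicitly ruling out $N(f) = 1$ so that the induction bottoms out cleanly at the identity; the argument is otherwise routine.
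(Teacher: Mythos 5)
Your proof is correct and takes a genuinely different route from the paper's. The paper reduces the claim to sorting: it observes that a swap-decomposition $f = \tau_{p-1}\circ\cdots\circ\tau_0$ is equivalent to a sequence of swaps that sorts $f$ back to the identity, and then simply invokes the existence and correctness of bubble sort (so the argument is short but outsources the heavy lifting to a known algorithm, and the induction is, implicitly, on something like the number of inversions). You instead give a self-contained strong induction on the number $N(f)$ of non-fixed points: pick any moved point $i_0$, postcompose with the swap $\tau$ of $i_0$ and $f(i_0)$, verify $N(\tau\circ f)<N(f)$ by a short case split, and unwind using $\tau=\tau^{-1}$. Your bookkeeping is right --- the only points whose fixed/non-fixed status can change under $x\mapsto\tau(f(x))$ are $i_0$ and $f^{-1}(i_0)$, giving $N$ dropping by $2$ when $f(i_1)=i_0$ and by $1$ otherwise --- and ruling out $N(f)=1$ makes the induction bottom out cleanly. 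Your approach is more elementary and requires no external algorithmic fact; the paper's is more compact but relies on the reader accepting bubble sort's correctness as a primitive. Both are standard and both are fine; in a paper where the only downstream use is the parity theorem, either would do.
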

\begin{proof}
Another way to read $f = \tau_{p-1}\circ \dots \circ \tau_0$ is $\tau_0^{-1} \circ \tau_1^{-1} \dots \circ \tau_{p-1}^{-1} \circ f = \text{Id}$ which means that the composition of transpositions $\tau_0^{-1} \circ \tau_1^{-1} \dots \circ \tau_{p-1}^{-1} = \tau_0 \circ \tau_1 \dots \circ \tau_{p-1}$ is \textbf{sorting} the permutation $f$ back to the identity. The existence and correctness of the \textbf{bubble sort} algorithm, which operates uniquely by performing swaps,  proves that such a sequence of swaps exists: we can take the swaps done by bubble sorting the sequence $[f(0),f(1),\dots,f(m-1)]$.
\end{proof}

\begin{restatable}[Parity of a bijection]{theorem}{thparity}
\label{th:parity}
Let $f\in \mathfrak{S}_m$. The parity of the number of swaps used in any swap-decomposition of $f$ does not depend on the decomposition. If $f= \tau_{p-1}\circ \dots \circ \tau_0$ and $f= \tau'_{p'-1}\circ \dots \circ \tau'_0$ then $p \equiv p' \; [2]$. Hence we say that the function $f$ is even if $p$ is even and odd otherwise.
\end{restatable}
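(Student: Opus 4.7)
The plan is to assign an intrinsic parity $\epsilon(f)\in\{+1,-1\}$ to every $f\in\mathfrak{S}_m$, depending only on $f$ and not on any swap-decomposition, such that (i) $\epsilon(\mathrm{Id}) = +1$ and (ii) $\epsilon(\tau\circ g) = -\epsilon(g)$ for every swap $\tau$ and every $g\in\mathfrak{S}_m$. Once both of these are established, any decomposition $f = \tau_{p-1}\circ\cdots\circ\tau_0$ yields $\epsilon(f) = (-1)^p$ by a one-line induction on $p$, so $p\bmod 2$ is forced to equal the intrinsic quantity $\epsilon(f)$ and is thus independent of the chosen decomposition.

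To define $\epsilon$, I would use the standard cycle decomposition: every $f\in\mathfrak{S}_m$ partitions $\{0,1,\dots,m-1\}$ into disjoint orbits under iteration of $f$ (including fixed points, viewed as $1$-cycles). Let $c(f)$ denote the number of such cycles and set $\epsilon(f) = (-1)^{m-c(f)}$. This quantity is manifestly well-defined from $f$ alone and satisfies $\epsilon(\mathrm{Id}) = (-1)^{m-m} = +1$, giving property~(i).

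The crux of the proof is property~(ii): for any swap $\tau=(a\;b)$ and any $g\in\mathfrak{S}_m$, one has $c(\tau\circ g) = c(g)\pm 1$. I would handle this by a direct case analysis on where $a$ and $b$ sit in the orbit structure of $g$. If $a$ and $b$ lie in the same cycle $(c_0,c_1,\ldots,c_{k-1})$ of $g$, with $a=c_i$ and $b=c_j$ for $i<j$, an explicit computation shows that $\tau\circ g$ restricted to this orbit splits it into the two cycles $(c_0,\ldots,c_{i-1},c_j,c_{j+1},\ldots,c_{k-1})$ and $(c_i,c_{i+1},\ldots,c_{j-1})$, adding one to the cycle count. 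Conversely, if $a$ and $b$ lie in two different cycles of $g$, the symmetric computation shows that these two cycles are glued into a single cycle under $\tau\circ g$, subtracting one from the cycle count. In either case $c(\tau\circ g)-c(g)=\pm 1$, hence $\epsilon(\tau\circ g) = -\epsilon(g)$.

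The main obstacle is really just this case analysis: one must carefully keep track of the direction of composition (the convention $\tau\circ g$ means apply $g$ first, then $\tau$) and verify the split/merge behaviour of orbits on the nose. Everything else is routine. As a byproduct the argument gives the explicit formula $p \equiv m - c(f) \pmod 2$, which identifies the invariant parity with a concrete combinatorial quantity and would also yield a lower bound on the number of swaps needed in any decomposition of $f$.
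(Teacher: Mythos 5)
Your proposal is correct, but it takes a genuinely different route from the paper. The paper defines the sign of a permutation via the Vandermonde-style product
$\epsilon(f) = \prod_{0 \leq j < i < m}(j-i)\,\big/\,\prod_{0 \leq j < i < m}(f(j)-f(i))$,
and proves (in Lemma~\ref{lem:sign}) that this ratio is always $\pm 1$, that $\epsilon$ is multiplicative, and that any swap has sign $-1$; the theorem then drops out immediately. You instead define the intrinsic invariant $\epsilon(f)=(-1)^{m-c(f)}$ via the cycle count $c(f)$, and carry the whole weight of the argument in the case analysis showing that precomposing a swap with $g$ either splits one $g$-cycle into two (when $a,b$ share a cycle) or merges two $g$-cycles into one (when they do not), so $c(\tau\circ g)=c(g)\pm 1$ and $\epsilon(\tau\circ g)=-\epsilon(g)$. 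Both are standard textbook proofs and both are complete. The paper's approach isolates the multiplicativity of $\epsilon$ as a clean algebraic fact and spends effort verifying that each swap contributes a factor $-1$ to the product. Your approach trades that algebra for a combinatorial split/merge argument, and has a concrete payoff that the paper does not make explicit: the formula $p\equiv m-c(f)\pmod 2$, which immediately gives Theorem~\ref{th:kcycles} (a $k$-cycle acting on $m$ points has $c=m-k+1$, hence parity $k-1$) without a separate argument. Your convention bookkeeping (``apply $g$ first, then $\tau$'') and the two displayed cycle shapes for the split case are correct as stated.
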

The proof is in 
Appendix~\ref{app:proof_parity}.

\begin{remark}
From the points made in the proof of Lemma~\ref{lem:swde} and in Theorem~\ref{th:parity}\footnote{Also known as the ``Futurama theorem'': \url{https://www.youtube.com/watch?v=J65GNFfL94c}}, we can also interpret the parity of a bijection to be the parity of the number of swaps needed to sort it back to the identity.
\end{remark}

\begin{example}
By $f=(1,0,3,2)\in \mathfrak{S}_4$, we mean $f(0) = 1, f(1) = 0, f(2) = 3, f(3) = 2$. The bijection $f=(1,0,3,2)$ is \textbf{even} as we can sort it in 2 swaps by swapping $1$ and $0$ then $3$ and $2$. The bijection $f=(0,2,1)\in \mathfrak{S}_3$ is \textbf{odd} as we can sort it in 1 swap by swapping $2$ and~$1$. 
\end{example}

\begin{corollary}[Multiplication table]
\label{cor:table}
When looking at the parity of bijections, the following multiplication table holds: $\textbf{odd}\circ\textbf{odd} = \textbf{even}$,
$\textbf{odd}\circ\textbf{even} = \textbf{odd}$,
$\textbf{even}\circ\textbf{odd} = \textbf{odd}$,
$\textbf{even}\circ\textbf{even} = \textbf{even}$.
\end{corollary}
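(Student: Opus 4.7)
The plan is to derive Corollary~\ref{cor:table} directly from Lemma~\ref{lem:swde} and Theorem~\ref{th:parity} by showing that concatenating swap-decompositions of $f$ and $g$ yields a swap-decomposition of $f \circ g$ whose length is the sum of the two component lengths.

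First, I would fix $f,g\in\mathfrak{S}_m$ and invoke Lemma~\ref{lem:swde} to obtain swap-decompositions $f = \tau_{p-1}\circ\cdots\circ\tau_0$ and $g = \sigma_{q-1}\circ\cdots\circ\sigma_0$. By associativity of $\circ$, this gives
\[
f\circ g \;=\; \tau_{p-1}\circ\cdots\circ\tau_0\circ\sigma_{q-1}\circ\cdots\circ\sigma_0,
\]
which is itself a swap-decomposition of $f\circ g$, of length $p+q$.

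Next, I would appeal to Theorem~\ref{th:parity}: the parity of the number of swaps in any swap-decomposition of $f\circ g$ is invariant, so it must equal $p+q \bmod 2$. Thus the parity of $f\circ g$ is the sum modulo~$2$ of the parities of $f$ and $g$. The four cases of the multiplication table then reduce to the trivial facts $\textbf{odd}+\textbf{odd}\equiv 0$, $\textbf{odd}+\textbf{even}\equiv 1$, $\textbf{even}+\textbf{odd}\equiv 1$, $\textbf{even}+\textbf{even}\equiv 0$ modulo~$2$, which exactly matches the table in the statement.

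There is no real obstacle here: once the heavy lifting of Theorem~\ref{th:parity} (well-definedness of parity) is available, this corollary is a one-line arithmetic consequence. The only thing to be careful about is making clear that one is using associativity of function composition in order to concatenate the two decompositions, and that one is then \emph{applying} Theorem~\ref{th:parity} rather than re-proving it — any other swap-decomposition of $f\circ g$ necessarily has the same parity as $p+q$, so the definition of the parity of $f\circ g$ is unambiguous.
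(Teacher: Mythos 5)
Your proposal is correct and takes essentially the same approach as the paper: concatenate swap-decompositions of $f$ and $g$ to get a length-$(p+q)$ swap-decomposition of $f\circ g$, then invoke Theorem~\ref{th:parity}. The only difference is that the paper spells out the $\textbf{even}\circ\textbf{even}$ case and declares the rest similar, while you handle all four cases at once via parity arithmetic modulo~$2$.
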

\begin{proof}
We give the proof for $\textbf{even}\circ\textbf{even} = \textbf{even}$, other cases are similar. Take $f,g\in \mathfrak{S}_m$ to be two even bijections. Decompose $f$ and $g$ into swaps: $f=\tau_{p-1}\circ \dots \circ \tau_0$, $g=\tau'_{q-1}\circ \dots \circ \tau'_0$. Because $f,g$ are even, we know that $p$ and $q$ are even. Note that $f\circ g = \tau_{p-1}\circ \dots \circ \tau_0 \circ \tau'_{q-1}\circ \dots \circ \tau'_0$. Hence there exists a swap-decomposition of $f\circ g$ using an even number, $p+q$, of swaps.
By Theorem~\ref{th:parity}, we conclude that $f\circ g$ is even.
\end{proof}

In this paper, we are only concerned by a very specific kind of bijections: $k$-cycles. Indeed, Lemma~\ref{lem:kcycles} will show that the circuit function of a bijective $2^n$-counter is a $2^n$-cycle. The parity of a $k$-cycle is easy to compute: it is equal to the parity of $k-1$ (Theorem~\ref{th:kcycles}).

\begin{restatable}[$k$-cycle]{definition}{defkcycles}
For $k\geq1$, a $k$-cycle $\rho \in \mathfrak{S}_m$ is a bijection such that there exists distinct $x_0,x_1,\dots,x_{k-1} \in \{0,\dots,m-1\}$ such that $
\rho(x_0) = x_1, \rho(x_1) = x_2, \dots, \rho(x_{k-1}) = x_0$ and
$\forall x \not \in \{x_0,\dots,x_{k-1}\},\, \rho(x) = x$.
\end{restatable}

\begin{restatable}[Parity of $k$-cycle]{theorem}{thkcycles}
\label{th:kcycles}
A $k$-cycle $\rho$ has the parity of the number $k-1$. It means that $\rho$ is odd iff $k-1$ is odd and $\rho$ is even iff $k-1$ is even. 

\end{restatable}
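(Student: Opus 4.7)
}

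The plan is to exhibit an explicit swap-decomposition of $\rho$ using exactly $k-1$ transpositions, and then to invoke Theorem~\ref{th:parity} (well-definedness of parity) to conclude that $\rho$ has the parity of $k-1$.

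Concretely, let $\rho$ be the $k$-cycle determined by distinct elements $x_0, x_1, \dots, x_{k-1}$ with $\rho(x_i)=x_{i+1}$ for $0 \le i < k-1$, $\rho(x_{k-1})=x_0$, and $\rho(x)=x$ elsewhere. For $i=1,\dots,k-1$, define the transposition $\tau_i \in \mathfrak{S}_m$ that swaps $x_0$ and $x_i$ (and fixes every other element). The heart of the argument is to check the identity
\begin{equation*}
\rho \;=\; \tau_{k-1} \circ \tau_{k-2} \circ \dots \circ \tau_1,
\end{equation*}
which I would verify by evaluating the right-hand side on each element of $\{0,\dots,m-1\}$:
\begin{enumerate}
\item On any element $x \notin \{x_0,\dots,x_{k-1}\}$, every $\tau_i$ acts as the identity, so the composition returns $x$, matching $\rho(x)=x$.
\item On $x_0$: $\tau_1$ sends $x_0 \mapsto x_1$, and every subsequent $\tau_i$ (with $i \ge 2$) fixes $x_1$. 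The composition returns $x_1 = \rho(x_0)$.
\item On $x_i$ with $1 \le i \le k-2$: the swaps $\tau_1, \dots, \tau_{i-1}$ fix $x_i$, then $\tau_i$ sends $x_i \mapsto x_0$, then $\tau_{i+1}$ sends $x_0 \mapsto x_{i+1}$, and the remaining swaps fix $x_{i+1}$. The composition returns $x_{i+1} = \rho(x_i)$.
\item On $x_{k-1}$: all of $\tau_1,\dots,\tau_{k-2}$ fix $x_{k-1}$, then $\tau_{k-1}$ sends $x_{k-1} \mapsto x_0$. The composition returns $x_0 = \rho(x_{k-1})$.
\end{enumerate}

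Having produced a swap-decomposition of $\rho$ of length exactly $k-1$, Theorem~\ref{th:parity} immediately yields that the parity of $\rho$ equals the parity of $k-1$, as required. The only genuinely delicate step is the case analysis above: one must be careful with the order in which the $\tau_i$ are composed (applied right-to-left), since a reversed ordering would instead realize the inverse cycle. Beyond that, the argument is a direct application of the already-established parity theorem, so no further obstacle is expected.
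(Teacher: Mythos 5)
Your proof is correct and follows essentially the same approach as the paper: exhibit an explicit decomposition of the $k$-cycle into exactly $k-1$ transpositions and then invoke Theorem~\ref{th:parity} to conclude. The only difference is cosmetic—the paper uses the ``chain'' decomposition $\tau_{x_0,x_1}\circ\tau_{x_1,x_2}\circ\dots\circ\tau_{x_{k-2},x_{k-1}}$ while you use the ``star'' decomposition $(x_0\,x_{k-1})\circ\dots\circ(x_0\,x_1)$, both of which are standard and correct; your element-by-element verification is more thorough than the paper's one-line assertion.
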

The proof is in 
Appendix~\ref{app:proof_parity}.

\begin{lemma}
\label{lem:kcycles}
The circuit function of a bijective $2^n$-counter is a $2^n$-cycle.
\end{lemma}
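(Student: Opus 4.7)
The plan is to combine the bijectivity hypothesis with the witness provided by the second clause of Definition~\ref{def:counter}. By assumption, $f_\mathcal{C} \in \mathfrak{S}_{2^n}$, i.e., $f_\mathcal{C}$ is a bijection on the domain $\{0,1\}^n$ of size $2^n$. By the second clause of Definition~\ref{def:counter}, there exists $x_0 \in \{0,1\}^n$ whose trace contains exactly $2^n$ distinct elements, namely all of $x_0, f_\mathcal{C}(x_0), f_\mathcal{C}^2(x_0), \dots, f_\mathcal{C}^{2^n-1}(x_0)$.

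Next I would argue that these iterates exhaust the entire domain. Since they are $2^n$ pairwise distinct elements of a set of cardinality $2^n$, we have $\{f_\mathcal{C}^i(x_0) \,|\, 0 \leq i < 2^n\} = \{0,1\}^n$. In particular, $f_\mathcal{C}^{2^n}(x_0) \in \{0,1\}^n$ must coincide with one of the earlier iterates; but $f_\mathcal{C}$ being a bijection means $f_\mathcal{C}^{2^n}(x_0) = f_\mathcal{C}^i(x_0)$ forces $f_\mathcal{C}^{2^n - i}(x_0) = x_0$ by applying $f_\mathcal{C}^{-i}$. Since $0 \leq i < 2^n$ and no earlier iterate equals $x_0$ (the trace has $2^n$ distinct elements starting with $x_0$), the only possibility is $i = 0$, giving $f_\mathcal{C}^{2^n}(x_0) = x_0$.

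Finally, setting $x_i := f_\mathcal{C}^i(x_0)$ for $0 \leq i < 2^n$, the elements $x_0, x_1, \dots, x_{2^n - 1}$ are pairwise distinct, cover all of $\{0,1\}^n$, and satisfy $f_\mathcal{C}(x_i) = x_{i+1}$ for $0 \leq i < 2^n - 1$ and $f_\mathcal{C}(x_{2^n - 1}) = x_0$. Matching this against Definition~\ref{def:kcycles} (the defining relations of a $k$-cycle), with $k = 2^n$ and with the fixed-point clause $\forall x \not\in \{x_0, \dots, x_{k-1}\},\, \rho(x) = x$ being vacuously satisfied because the $x_i$'s exhaust the domain, we conclude that $f_\mathcal{C}$ is a $2^n$-cycle.

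There is no real obstacle here: the statement is essentially a direct unpacking of the counter and $k$-cycle definitions under the bijectivity hypothesis, with the only small point being the pigeonhole argument that $2^n$ distinct iterates inside a set of size $2^n$ must close into a single cycle of length exactly $2^n$.
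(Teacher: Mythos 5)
Your proof is correct and takes essentially the same approach as the paper: the paper's own proof is a one-line appeal to Figure~\ref{fig:all_cycles}(a), and your argument is the formal pigeonhole justification of exactly what that figure depicts. Your write-up is actually more careful than the paper's, in particular the step using bijectivity to rule out the orbit closing back on an intermediate iterate rather than on $x_0$ itself.
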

\begin{proof}
The mapping produced by the circuit function of a bijective $2^n$-counter is illustrated in Figure~\ref{fig:all_cycles}(a), it matches the definition of a $2^n$-cycle and generalises to any $n\in\Nset^+$.
\end{proof}

\begin{corollary}[Bijective $2^n$-counters have odd circuit functions]
\label{cor:kcycles}
The circuit function of a $2^n$-counter is an odd bijection.
\end{corollary}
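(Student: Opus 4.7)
The plan is to chain together the two results the excerpt has just established about bijective counters, namely Lemma~\ref{lem:kcycles} and Theorem~\ref{th:kcycles}. Lemma~\ref{lem:kcycles} tells us that the circuit function $f_\mathcal{C}$ of a bijective $2^n$-counter is a $2^n$-cycle in $\mathfrak{S}_{2^n}$, and Theorem~\ref{th:kcycles} tells us that a $k$-cycle has the parity of $k-1$. So the whole argument reduces to a one-line parity calculation on $2^n - 1$.

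Concretely, the proof I would write is as follows. Let $f_\mathcal{C}$ be the circuit function of a $2^n$-counter that is a bijection. By Lemma~\ref{lem:kcycles}, $f_\mathcal{C}$ is a $2^n$-cycle. By Theorem~\ref{th:kcycles}, a $k$-cycle has parity equal to the parity of $k-1$, so the parity of $f_\mathcal{C}$ equals the parity of $2^n - 1$. For every $n \in \Nset^+$, the integer $2^n - 1$ is odd, hence $f_\mathcal{C}$ is an odd bijection.

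There is no real obstacle here: the statement is a direct corollary, and all the work has already been done upstream (the cycle structure of maximal bijective counters in Lemma~\ref{lem:kcycles}, and the parity formula for $k$-cycles in Theorem~\ref{th:kcycles}, whose technical proof is deferred to the appendix). The only thing to be slightly careful about is the implicit assumption $n \geq 1$, which is guaranteed because for $n = 0$ there is no nontrivial $2^n$-counter and Definition~\ref{def:counter} is stated for $n$-wire railway circuits with $n > 0$. Thus the corollary follows immediately.
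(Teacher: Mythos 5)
Your proof is correct and follows essentially the same route as the paper: invoke Lemma~\ref{lem:kcycles} to see that a bijective $2^n$-counter's circuit function is a $2^n$-cycle, then apply Theorem~\ref{th:kcycles} to conclude it has the parity of $2^n-1$, which is odd. The paper's version also cites Lemma~\ref{lem:bij} up front, but since the corollary (per its title) concerns the bijective case, your explicit assumption of bijectivity amounts to the same thing.
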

\begin{proof}
We know that the circuit function of a $2^n$-counter is a bijection (Lemma~\ref{lem:bij}). We know that it is a $2^n$-cycle (Lemma~\ref{lem:kcycles}). Hence, a $2^n$-counter has the parity of $2^n-1$ which is odd (Theorem~\ref{th:kcycles}).
\end{proof}

\section{Local railway circuits do not count to $2^n$}
\label{sec:main}
Here we use the results of Section~\ref{sec:thf}, to show our main result, Theorem~\ref{th:main}, by giving two results on atomic components. The first is that atomic components are not  odd bijections. This result is known in the context of reversible circuits \cite{DBLP:conf/icalp/Toffoli80, DBLP:journals/corr/Xu15e, DBLP:conf/rc/BoykettKS16}, we give a proof that fits our framework of railway circuits. The second is that atomic components are not quasi-bijections.

\begin{lemma}[Locality restricts atomic components]
\label{lem:atom_restricted}
Let $f_{\mathcal{C}} = f_{k-1} \circ f_{k-2} \circ \dots \circ f_0$ be the decomposition into atomic components of the circuit function of a local $n$-wire railway circuit. Then we have the following:
\begin{enumerate}

\item No $f_i$ can be an odd bijection
\item No $f_i$ can be a quasi-bijection, i.e., $r(f_i) \neq 1$

\end{enumerate}
\end{lemma}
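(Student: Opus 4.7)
The plan is to exploit the pass-through structure that Lemma~\ref{lem:pass} gives each atomic component $f_i$. Let $j$ be the pass-through coordinate supplied by that lemma. Identifying $\{0,1\}^n$ with $\{0,1\} \times \{0,1\}^{n-1}$, where the first factor stands for coordinate $j$, Equations~\eqref{eq:two} and~\eqref{eq:three} together say that there exists a function $g_i : \{0,1\}^{n-1} \to \{0,1\}^{n-1}$, independent of the value of $x_j$, such that
\[
f_i(b, \bar{x}) = (b, g_i(\bar{x})) \qquad \text{for every } b \in \{0,1\}, \; \bar{x} \in \{0,1\}^{n-1}.
\]
So $f_i$ splits as two independent disjoint copies of $g_i$, one living on the slice $\{0\}\times\{0,1\}^{n-1}$ and one on $\{1\}\times\{0,1\}^{n-1}$. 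Both restrictions in the lemma will come from this product/disjoint-copies structure.

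For part~(2), I would compute the ramification directly. For any output $(b,\bar{y})$, the preimages under $f_i$ are exactly $\{(b,\bar{x}) : g_i(\bar{x}) = \bar{y}\}$, so $a_{(b,\bar{y})}(f_i) = a_{\bar{y}}(g_i)$ and hence $r_{(b,\bar{y})}(f_i) = r_{\bar{y}}(g_i)$. Summing over both slices gives $r(f_i) = 2\, r(g_i)$, which is always even. In particular $r(f_i) \neq 1$, so $f_i$ is not a quasi-bijection.

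For part~(1), assume $f_i$ is a bijection. Then $g_i$ must itself be a bijection of $\{0,1\}^{n-1}$ (otherwise the displayed formula would force collisions in $f_i$). By Lemma~\ref{lem:swde}, write $g_i = \tau_{p-1} \circ \cdots \circ \tau_0$ as a product of swaps on $\{0,1\}^{n-1}$. Each $\tau_k$ lifts to two swaps on $\{0,1\}^n$, namely $\tilde{\tau}_k^0$ acting on the $b=0$ slice and $\tilde{\tau}_k^1$ acting on the $b=1$ slice. Because the two slices are disjoint, $\tilde{\tau}_k^0$ and $\tilde{\tau}_k^1$ commute, and assembling the pieces yields a swap-decomposition of $f_i$ using $2p$ swaps. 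By Theorem~\ref{th:parity}, $f_i$ is therefore even, and so cannot be an odd bijection.

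The main obstacle, if any, is purely conceptual: one must see that the two properties given by Lemma~\ref{lem:pass} really do say that $f_i$ factors, up to reordering the coordinates, as $\mathrm{id}_{\{0,1\}} \times g_i$. Once this factorisation is recognised, the ramification count and the ``doubling of swaps'' are each a line. A minor piece of bookkeeping is that the pass-through coordinate $j$ need not be the first or last wire, so the identification with $\{0,1\}\times\{0,1\}^{n-1}$ involves an innocuous permutation of coordinates; this does not affect the parity or ramification-degree computations, since both quantities are invariants of the function up to relabelling of the domain.
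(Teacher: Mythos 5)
Your proof is correct, and the factorisation $f_i \cong \mathrm{id}_{\{0,1\}} \times g_i$ (after permuting the pass-through coordinate to the front) is exactly the structure that Equations~\eqref{eq:two} and~\eqref{eq:three} encode. The paper exploits the same ``doubling'' phenomenon but packages it differently, in two respects. For part~(1) the paper performs a case split according to whether the pass-through wire is the first or the last (which it can always arrange, since a gate straddles a contiguous block of fewer than $n$ wires), and then argues by explicitly sorting the truth-table columns with an even number of swaps; you instead lift a swap-decomposition of $g_i$ to a swap-decomposition of $f_i$ of twice the length. These are the same idea, but your version avoids the case split and the truth-table bookkeeping. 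For part~(2) the routes diverge more: the paper fixes a coordinate $j$ on which the doubled output and the missing output differ, and derives a contradiction from the parity of the number of ones in row $j$ of the truth table; you instead prove the cleaner and strictly stronger identity $r(f_i) = 2\,r(g_i)$, which shows $r(f_i)$ is always \emph{even}, not merely different from~$1$. Your argument thus gives a more conceptual reason for both parts at once --- every numerical invariant of $f_i$ (parity, ramification degree) is doubled relative to $g_i$ --- and generalises immediately beyond railway circuits to any function with a pass-through, non-influential coordinate, without needing that coordinate to be first or last. The one point you are right to flag and do handle is the coordinate relabelling: both $\epsilon$ and $r$ are invariant under conjugation by a permutation of the domain, so the reordering is harmless.
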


\begin{proof}

In the following, when we refer to the truth-table of $f_i$ we mean the Boolean $(n,2^n)$ matrix where the $p^{\text{th}}$ column corresponds to the bits of the $p^{\text{th}}$ element in the $2^n$-long sequence $[f_i(0,0,\cdots,0), f_i(0,0,\cdots,1), \dots, f_i(1,1,\cdots,1)]$.
\begin{enumerate}
\item Because $f_i$ is local either its first bit or its last bit has to have the properties outlined in Lemma~\ref{lem:pass}, Equations~\eqref{eq:two} and~\eqref{eq:three}. Two cases:
\begin{itemize}\item Let suppose that the first bit of $f_i$ has the properties: it is pass-through ($y_0 = x_0$) and it does not affect any other output bits than $y_0$. Then, $M$, the truth-table of $f_i$ has a very remarkable structure, the first line is composed of $2^{n-1}$ zeros followed by $2^{n-1}$ ones. Furthermore, the following $(n-1,2^{n-1})$ sub-matrices of $M$, $M_1$ and $M_2$ are equal. The sub-matrix $M_1$ is defined by excluding the first row of $M$ and taking the first $2^{n-1}$ columns while $M_2$ also excludes the first line of $M$ but takes column between $2^{n-1}$ and $2^n$. Indeed, since $x_0$ has no influence on $y_1,\dots,y_{n-1}$ we have $M_1 = M_2$. That means that we can sort $M$ in an even number of steps by using twice the sequence of swaps needed to sort $M_1 = M_2$: we first sort the first half of $M$ then transpose the swaps we used to the second half. Hence, since we can use an even number of swaps to sort $f_i$, by Theorem~\ref{th:parity}, $f_i$ is even.
\item Let suppose that the last bit of $f_i$ has the properties: it is pass-through ($y_{n-1} = x_{n-1}$) and it does not affect any other output bits than $y_{n-1}$. Again, $M$, the truth-table of $f_i$ has a very remarkable structure: an even column $p$ is such that column $p+1$ share the same first $n-1$ bits and the last bit of column $p$ is $0$ while the last bit of column $p+1$ is $1$. Column $p$ and column $p+1$ are next to each other in lexicographic order. It means that we sort the columns of $M$ by swapping blocks of two columns at each step. Since swapping two blocks of two columns can be implemented by using $2$ swaps, with this technique, we will use a multiple of $2$ swaps to sort the table. By Theorem~\ref{th:parity}, it implies that $f_i$ is even.
\end{itemize}
\item The truth-table of a quasi-bijection has the following properties: only two columns appear twice with $n$-bit vector $x_0 \in \{0,1\}^n$ and exactly one $n$-bit vector $x_1 \in \{0,1\}^n$ appears nowhere in the table. The Hamilton distance of $x_0$ and $x_1$ is at least one. Let suppose that $x_0$ and $x_1$ disagree in their $j^\text{th}$ bit, $\pi_j(x_0) \neq \pi_j(x_1)$. W.l.o.g we can take $\pi_j(x_0) = 1$. Now, because the vector $x_0$ is the only vector to appear twice in the truth table, it means that on the $j^\text{th}$ line of $M$ we see $2^{n-1}+1$ ones versus $2^{n-1}-1$ zeros, so we see an odd number of ones and an odd number of zeros. That contradict the fact that, because $f_i$ is local, $y_j$ does not depend on at least one input and hence, the number of zeros and ones on the $j^\text{th}$ line of $M$ is at least a multiple of $2$.\qedhere
\end{enumerate}
\end{proof}
We now have all the elements to prove our main result:
\thmain*
\begin{proof}
Consider the circuit function of a local $n$-wire railway circuit $\mathcal{C}$ which implements a $2^n$-counter and its decomposition into atomic components, $f_\mathcal{C} = f_{k-1} \circ f_{k-2} \circ \dots \circ f_0$.
We know that $f$ must be either a bijection or a quasi bijection (Lemma~\ref{lem:bij}), giving two cases: 
\begin{enumerate}
\item If $f$ is a bijection, by Corollary~\ref{cor:ram}, each atomic component must be a bijection too. Furthermore, by Corollary~\ref{cor:kcycles}, $f$ must be an odd bijection. But, with Lemma~\ref{lem:atom_restricted}, we know that each atomic component can only be an even bijection, and by Corollary~\ref{cor:table}, composing even bijections only leads to even bijections. Hence, $f$ cannot be odd and there are no bijective $2^n$-counter.
\item If $f$ is a quasi-bijection, by Corollary~\ref{cor:ram}, each atomic component must be either a bijection or a quasi-bijection. Furthermore we need at least one atomic component to be a quasi-bijection since composing bijections only leads to bijections. However, Lemma~\ref{lem:atom_restricted} shows that no atomic component can be a quasi-bijection. Hence, $f$ cannot be a quasi-bijection and there are no quasi-bijective $2^n$-counter.\qedhere
\end{enumerate} 
\end{proof}

\section{Self-assembled counters}\label{sec:sa}

Here we apply the  Boolean circuit framework already established in previous sections to show limitations of self-assembled counters that work in base 2.
We give a short description of the abstract Tile Assembly Model (aTAM)~\cite{Winf98,rothemund2000program}, more details can be found elsewhere~\cite{PatitzSurveyJournal,DotCACM}. 

\subsection{Self-assembly definitions}\label{sec:sa defs}
Let $\Nset = \{0,1,\ldots\}$, $\Nset^+ = \{1,2,\ldots\}$, and $\Zset,\Rset$ be the integers and reals.

In the aTAM, one considers a set of square tile types $T$ where each square side has an associated {\em glue type}, a pair 
$(s,u)$  where $g$ is a (typically binary) string and 
$u \in \mathbb{N}$ is a glue strength. 
A tile $(t,z) \in T\times\Zset^2$ is a positioning of a tile type on the integer lattice. 
A glue is a pair $(g,z) \in G\times \Hset$ where 
$\Hset$ is the set of half-integer points $\Hset = \left\{ z \pm h \mid z \in \mathbb{Z}^2 , h \in  \{ (0,0.5) , (0.5,0) \} \right\}$ 
and 
$G$ is the set of all glue types of $T$. 
An assembly is a partial function $\alpha : \mathbb{Z}^2 \rightarrow T$, whose domain is a connected set. 
For $X\subset\Zset^2$ we let $\alpha |_X$ denote the restriction of $\alpha$ to domain $X$, i.e.\ $\alpha |_X : X\rightarrow T$ and for all $z\in X$, $\alpha(z)=\alpha |_X(z)$.

Let $\mathcal{T} = (T,\sigma,\tau)$ be an aTAM system where
$T$ is a set of tile types, $\tau \in \mathbb{N}^+$ is  the temperature and $\sigma$ is an assembly called the seed assembly. 
The process of self-assembly proceeds as follows.
A tile $(t,z)$ {\em sticks} to an assembly $\alpha$ if $z\in\Zset^2$ is adjacent in $\Zset^2$ to, but not on, a tile position of $\alpha$ and the glues of $t$ that touch\footnote{A pair of glues touch if they share the same half-integer position in $\Hset$.} glues of $\alpha$ of the same glue type have the sum of their strengths being at least $\tau$. 
A {\em tile placement} is a tuple $(t,z,\textrm{In}) \in T \times \Zset^2 \times \{N,E,S,W\}^{\leq 4}$, where 
$\textrm{In}$ denotes the $k \in \{ 1,2,3,4 \}$ tile sides which stick with matching glues and are called 
{\em input sides}; the remaining $4-k$ sides are called {\em output sides}. 
For example, a tile of type $t$ that binds at position $z$ using its north and west side would be denoted $(t,z,(N,W))$.
After the tile placement $(t,z, \mathrm{In})$ to assembly $\alpha$, 
the resulting new assembly $\alpha'= \alpha\cup \{ z\rightarrow t \}$ is said to contain the tile $(t,z)$, and we write $\alpha \rightarrow_\calT \alpha'$. 
An {\em assembly sequence} is a sequence of assemblies 
 $\vec \alpha = \alpha_0, \alpha_1 \ldots, \alpha_k$
 where for all $i$,  $\alpha_i \rightarrow_\calT \alpha_{i+1}$, in other words: 
each assembly is equal to the previous assembly plus one newly-stuck tile. 
A terminal assembly of $\calT$ is an assembly to which no tiles stick. 
$\calT$ is said to be {\em directed} if it has exactly one terminal assembly and undirected otherwise.

\subsection{Computing Boolean functions by self-assembly}
The following definitions are for representing Boolean functions as assembly systems.

\begin{definition}[bit-encoding glues]
\label{def:glues to bits}
A {\em bit-encoding glue type} is a pair  $g=(s,b) \in \Sigma^\ast  \times \{0,1, \epsilon\}$, where $s$ is a string over the finite alphabet $\Sigma$. 
If $b\neq\epsilon$, $g$ is said to encode bit~$b$, otherwise if $b=\epsilon$, we say~$g$ does not encode a bit. 
A bit encoding glue is a pair $(g,z)$ where $g$ is a bit-encoding glue type, and $z\in \Hset$ is a position.  
\end{definition}

In a tile placement, if a bit encoding glue is on an input side of the tile placement we call that an input bit to the tile  placement, if it is on an output side it is called an output bit.

\begin{definition}[Cleanly mapping tile placements to a railway circuit gate]\label{def:cleanly}
Let $A$ be a set of assembly sequences  
that use tiles with bit-encoding glues, 
let $z \in \Zset^2$ be a position, 
and $P_z$ be union of the tile placements from position $z$ over all $\vec\alpha \in A$. 
$P_z$ is said to {\em cleanly map to a railway circuit gate} if
\begin{enumerate}[(a)]
\item each $\vec\alpha$ has a tile placement at position $z$; and 
\item there is a $k\in \{0,1,2 \}$ such that all placements in $P_z$ map $k$ input bits to $k$ output bits; and
\item if an $\epsilon$-glue $g$ (non-0/1 encoding glue) appears at direction $d \in \{ N,W,S,E \}$  for some $p\in P_z$ then all $p'\in P_z$ have glue $g$ at direction $d$.
\end{enumerate}
\end{definition}

\begin{remark}
The previous definition is crafted to allow glues to encode bits in a way that can be mapped to railway circuit  gates, but also to prevent tiles from exploiting non-bit-encoding glues to ``cheat'' by working in a base  higher than 2.
\end{remark}

\begin{definition}[glue curve]
A {\em glue curve} $c : (0,1)\rightarrow\Rset^2$ is an infinite-length simple curve\footnote{A simple curve is a mapping from the open unit interval $(0,1)$ to $\Rset^2$ that is injective (non-self-intersecting).}
that starts as of a vertical ray from the south, then has a finite number of unit-length straight-line segments 
that each trace along a tile side (and thus  each touch a single point in $\Hset$), 
and ends with a vertical ray to the north.
\end{definition}
By a generalisation of the Jordan curve theorem to infinite-length simple polygonal curves, 
a glue curve $c$ cuts the $\Rset^2$ plane in two~\cite{paths2020} (Theorem~B.3). 
We let $\mathrm{LHS}(c) \subsetneq \mathbb{Z}^2$ denote the points of  $\mathbb{Z}^2$ that are on the left-hand side\footnote{Intuitively, the left-hand side of a glue curve (or any an infinite simple curve) is the set of points from $\Rset^2$ that are on one's left-hand side as one walks along the curve, excluding the curve itself. See~\cite{paths2020} for a more formal definition appropriate to glue curves.} of $c$, and 
$\mathrm{RHS}(c) \subsetneq \mathbb{Z}^2$ denote the points of  $\mathbb{Z}^2$ that are on the right-hand side of $c$.
For a vector $\vec v$ in $\Rset^2$ we define 
$c + \vec v$ to be the curve with the same domain as $c$ (the interval $(0,1)$) but with image $\cup_{x\in (0,1)}  c(x)+\vec v$ 
i.e. the translation of the image of $c$ by $\vec v$.

For an assembly $\alpha$ and glue curve $c$ 
let $\gac{\alpha}{c} = g_0, g_1,\ldots$ denote the sequence of all glues of tiles of $\alpha$ that are positioned on $c$, written in $c$-order. 
Let $\calB : \{ g \mid g \textrm{ is a sequence of bit-encoding glues} \} \times \{ c \mid c \textrm{ is a glue curve} \} \rightarrow \{ 0,1\}^\ast$, 
where for the glue sequence  
$g = (s_0,b_0), (s_0,b_1),\ldots$ we define 
$\calB( g ) = b_0 b_1\dots \in \{0,1 \}^\ast$ to be the bit-string encoded by the glues $g$ (here if some 
$b=\epsilon$ it is interpreted as the empty string). 
Hence, $\calB( \gac{\alpha}{c} )$ is the sequence of bits encoded along glue curve $c$ by assembly $\alpha$.

\begin{definition}[Layer-computing an $n$-bit function]\label{def:assemble a layer function}
Let $n\in\Nset$.
The tile set $T$ is said to {\em layer-compute} the $n$-bit function $f : \{0,1\}^n\rightarrow\{0,1\}^n$  
if there exists  
a temperature $\tau\in\Nset$,
a glue curve $c$, 
and a vector $\vec v \in \Rset^2$ with positive $\mathrm{x}$-component such that, 
for all $i \in \{1,2,\ldots \}$, $\img(c) \cap \img(c+i\cdot v) = \emptyset$, 
and for all $x \in \{0,1\}^n$
there is an assembly $\sigma_x$ positioned on the left-hand side of $c$, 
such that the tile assembly system $\calT_x = (T,\sigma_x,\tau)$  is directed and 
\begin{enumerate}[(a)]
\item\label{def:condition:glue encoding} 
$\calB(\gac{\sigma_x}{c})=x$ and  
$\calB(\gac{\alpha_x}{(c+\vec v)})=f(x)$ where $\alpha_x$ is the terminal assembly of $\calT_x$; and  

\item\label{def:condition:niceseq} 
$\calT_x$ has at least one assembly sequence $\vec{\alpha}_x$ that assembles all of $\alpha |_{\mathrm{LHS}(c + \vec v)}$ without placing any tile of $\alpha |_{\mathrm{RHS}(c + \vec v)}$; and 

\item\label{def:condition:tile placements map to gate} 
for all positions $z\in ( \mathrm{RHS}(c) \cap \mathrm{LHS}(c + \vec v) )\subsetneq \Zset^2$, 
and for all $x$
the set $P_z$ of tile placements at position $z$ in $\vec\alpha_x$ (from~(\ref{def:condition:niceseq}))
map cleanly to a gate (Definition~\ref{def:cleanly}).
\end{enumerate}
\end{definition}

\begin{remark}
Definition~\ref{def:assemble an iterated layer function} and Lemma~\ref{lem:self-assembly of a layer fn} below are not needed to prove our main result, but are included to show that any system satisfying Definition~\ref{def:assemble a layer function} can be ``iterated'' to compute layer-by-layer, similar to the sense in which tile-based counters in the literature compute layer-by-layer.
\end{remark}

\begin{definition}[Computing an iterated $n$-bit function]\label{def:assemble an iterated layer function}
The tile set $T$ is said to layer-compute $f^k$, 
i.e., $k$ iterations of the Boolean function $f : \{0,1\}^n\rightarrow\{0,1\}^n$,  
 if 
\begin{enumerate}[(a)]
\item $T$ layer-computes $f$ via Definition~\ref{def:assemble a layer function}, and 
\item $\calT_x$'s terminal assembly $\alpha_x$, for each $i\in \{0,1,\ldots,k-1\}$, has
$\calB(\gac{ \alpha_x}{(c + i \cdot \vec v)}) = f^i(x)$.
\end{enumerate}
\end{definition}
\begin{lemma}\label{lem:self-assembly of a layer fn}
Let $T$ be a tile set that  layer-computes the $n$-bit function $f$ 
according to Definition~\ref{def:assemble a layer function}. 
Then for any $k\in\Nset$, $T$ also computes 
$f^k$ according to Definition~\ref{def:assemble an iterated layer function}.
\end{lemma}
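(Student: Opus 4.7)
I would argue by induction on $k$. The base cases $k \in \{0,1,2\}$ are immediate from Definition~\ref{def:assemble a layer function}: condition (a) directly gives $\calB(\gac{\alpha_x}{c})=x=f^0(x)$ (from the seed encoding $x$) and $\calB(\gac{\alpha_x}{(c+\vec v)})=f(x)=f^1(x)$, while the disjointness requirement $\img(c)\cap\img(c+i\vec v)=\emptyset$ ensures layer $i=0,1$ are well-separated.

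For the inductive step, assume $T$ layer-computes $f^k$ according to Definition~\ref{def:assemble an iterated layer function}, so in particular for every $x$, the terminal assembly $\alpha_x$ of $\calT_x$ satisfies $\calB(\gac{\alpha_x}{(c+(k-1)\vec v)})=f^{k-1}(x)$. Writing $y=f^{k-1}(x)$, the goal is to show $\calB(\gac{\alpha_x}{(c+k\vec v)})=f(y)=f^k(x)$. The plan is to apply condition (b) of Definition~\ref{def:assemble a layer function} (\emph{niceseq}) iteratively so as to reorder the assembly sequence and grow $\alpha_x$ in layer-by-layer fashion: first complete $\alpha_x|_{\mathrm{LHS}(c+\vec v)}$, then extend to $\alpha_x|_{\mathrm{LHS}(c+2\vec v)}$, and so on up to $\alpha_x|_{\mathrm{LHS}(c+(k-1)\vec v)}$. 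By the inductive hypothesis the exposed glue sequence along $c+(k-1)\vec v$ at the end of this process encodes $y$.

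The heart of the proof is then to invoke translational invariance of aTAM growth. Consider the seed $\sigma_y$ that Definition~\ref{def:assemble a layer function} provides for input $y$: its assembly system $\calT_y=(T,\sigma_y,\tau)$ is directed and produces glues encoding $f(y)$ along $c+\vec v$. Translating $\sigma_y$ by $(k-1)\vec v$, the translated system is equivalent (up to shift) to $\calT_y$ because $T$ and $\tau$ are position-independent, and the terminal assembly of the translate encodes $f(y)=f^k(x)$ along $c+k\vec v$. What remains is to show that the right-of-$c+(k-1)\vec v$ portion of $\alpha_x$ agrees with this translated terminal assembly along $c+k\vec v$. Since the tile set $T$ is fixed and must behave consistently on every input, the glue types along $c+(k-1)\vec v$ presented by $\alpha_x|_{\mathrm{LHS}(c+(k-1)\vec v)}$ must coincide with those presented by the translated $\sigma_y$ (both being the unique canonical $T$-encoding of the bit string $y$); combined with directedness of $\calT_y$, this forces the two growths on the right of $c+(k-1)\vec v$ to coincide, and in particular their bit encodings along $c+k\vec v$ agree.

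\textbf{Main obstacle.} The technically delicate step is the matching between the in-progress frontier produced by growing $\alpha_x$ and the seed $\sigma_y$ promised by Definition~\ref{def:assemble a layer function}: we need not only that their bit encodings along $c+(k-1)\vec v$ coincide (immediate from the inductive hypothesis) but also that their actual glue types coincide, since aTAM growth is determined by glue types not by their semantic bit values. I would address this by a small auxiliary lemma stating that in a tile set layer-computing $f$, the glue sequence along $c+\vec v$ in $\alpha_x$ depends only on $\calB(\gac{\alpha_x}{(c+\vec v)})$ (essentially because the tile set must provide a consistent encoding across all $2^n$ inputs, and directedness rules out ambiguity); once this is in hand, the inductive step closes cleanly and iterating the argument yields Definition~\ref{def:assemble an iterated layer function}(b) for all $i\in\{0,\ldots,k-1\}$ and all $k\in\Nset$.
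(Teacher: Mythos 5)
Your approach is essentially the paper's: induct on the iteration index, translate the single-layer growth promised by Definition~\ref{def:assemble a layer function} by $i\vec v$, and splice it onto the assembly already built up to $c+i\vec v$. The paper does this concretely by listing the translated tile placements $(t, z+i\cdot\vec v)_k$ and appending them, in the same order as in the layer-one assembly sequence for $\sigma_{y'}$, to the assembly $\alpha_m$ that is complete up to $c+i\vec v$ and encodes $f^i(x)$ along that cut.

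You have correctly put your finger on the delicate step, and it is worth noting that the paper's own proof passes over it silently: tile attachment in $\rightarrow_\calT$ is governed by glue \emph{types}, not by the bits $\calB$ extracts from them, so for the translated placement $(t, z+i\cdot\vec v)_0$ to actually stick, the glue types that $\alpha_m$ exposes along $c+i\vec v$ must coincide (up to translation) with those of $\sigma_{y'}$ along $c$. However, the auxiliary lemma you propose — that the glue sequence along $c+\vec v$ in $\alpha_x$ depends only on $\calB(\gac{\alpha_x}{(c+\vec v)})$ — is not a consequence of Definition~\ref{def:assemble a layer function}. That definition constrains only $\calB$, which discards the string component $s$ of a bit-encoding glue $(s,b)$; two inputs $x_1,x_2$ with $f(x_1)=f(x_2)$ may legally yield output glue strings that differ while encoding the same bits, and directedness is a per-input property that does not tie distinct $\calT_{x_1},\calT_{x_2}$ together. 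Moreover, even granting your lemma, it compares outputs to outputs, whereas the matching actually needed is \emph{across the output/seed boundary} — the glues of $\alpha_x$ along $c+\vec v$ against the glues of $\sigma_{f(x)}$ along $c$ — a cross-comparison Definition~\ref{def:assemble a layer function} does not constrain at all. So the lemma as you state it would not close the gap even if it held. Since the paper's proof tacitly carries the same unstated hypothesis, the honest fix (for either argument) is to strengthen Definition~\ref{def:assemble a layer function} to require that the glue sequence exposed by $\alpha_x$ along $c+\vec v$ be precisely the $\vec v$-translate of the glue sequence of $\sigma_{f(x)}$ along $c$; with that addition both your argument and the paper's go through.
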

\begin{proof} 
We give a simple proof by induction on iteration $f^i$ for $i \in \{0,1,\ldots,k-1 \} $. 
The intuition is to let $y' = f^i(x) \in \{0,1\}^n$, then use Definition~\ref{def:assemble a layer function} 
to compute $f^1(y')$, and then translate the resulting assembly to a later layer to compute $f^{i+1}(y') = f( f^{i}(y') )$.

For the base case, let $i=0$. By Definition~\ref{def:assemble a layer function}, the seed $\sigma_x$ encodes $x\in\{ 0,1\}^n$ along $c$ as $\calB(\gac{\sigma_x}{c})=x$ and $\sigma_x$ places no tiles in $\mathrm{RHS}(c) \subsetneq \Zset^2$. 
Hence $T$ computes $f^0(x)=x$ according to Definition~\ref{def:assemble an iterated layer function}.

For the inductive case, let $i > 0$.
Let $\vec{\alpha} = \alpha_0,\alpha_1,\ldots,\alpha_m$ be an assembly sequence of $\calT_x$ such that, inductively, we assume that on the last assembly $\alpha_m$ of $\vec\alpha$ the cut $c + i \cdot \vec v$ encodes the bit sequence 
$f^i(x) = \calB(\gac{\alpha_x}{(c+\vec i \cdot v)}) \in \{0, 1\}^n$, 
and $\alpha_m$ has no tiles in $\mathrm{RHS}(c + i \cdot \vec v)$, and 
no more tiles can stick in $\mathrm{LHS}(c + i \cdot \vec v)$.
Next, consider the seed assembly $\sigma_{y'}$ that encodes $y' = f^i(x)$, and let 
$\vec{\alpha} = \sigma_{y'}, \alpha_1, \alpha_2, \ldots $ 
be an assembly sequence that satisfies   
Definition~\ref{def:assemble a layer function}(\ref{def:condition:niceseq}) and (\ref{def:condition:tile placements map to gate}).
For each $k \in \{ 0, 1,\ldots, |\vec{\alpha} |-1 \}$ 
let $(t,z)_k$ be the tile that sticks (is attached) to assembly 
$\alpha_k$ to give the next assembly $\alpha_{k+1}$ in~$\vec\alpha$. 
We make a new assembly sequence $\vec{\alpha}'$
that starts with the assembly $\alpha_m$ (defined above), 
then stick the `translated tile' 
$(t,z + i\cdot \vec v)_0$, 
and then sticks $(t,z + i\cdot \vec v)_1$, and so on.
  In other words define
  $\vec{\alpha}' = \alpha_{m,0} \rightarrow_\calT \alpha_{m,1} \rightarrow_\calT \cdots $ by beginning with 
  $ \alpha_{m} = \alpha_{m,0}$ and in turn attaching the tiles 
$(t,z + i\cdot \vec v)_0, (t,z + i\cdot \vec v)_1,\ldots $
 in order. 
 the assembly sequence 
$\vec{\alpha}'$ 
eventually contains an assembly that encodes $f(y') = f^{i+1}(x)$ along the curve $c + (i+1)\cdot \vec v$ as $f^{i+1}(x) = \calB(\gac{\alpha_x}{(c+ (i+1)\cdot\vec v)})$. 
This completes the induction.
\end{proof}

\newpage
\subsection{Tile sets computing layer-by-layer are simulated by railway circuits} 

\begin{lemma}[Railway circuits simulate layer-computing tile sets]\label{lem:tiles to local circuit}
Let $T$ be a tile set that layer-computes, via Definition~\ref{def:assemble a layer function}, the Boolean function $f : \{0,1\}^n\rightarrow\{0,1\}^n$ for $n \geq 3$.
Then there is a local railway circuit that computes $f$.
\end{lemma}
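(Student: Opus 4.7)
The plan is to construct an $n$-wire local railway circuit whose computation mirrors, layer by layer, the growth of the assembly from the curve $c$ to the curve $c+\vec v$. The $n$ wires will be identified with the $n$ bit-encoding glue positions that by condition~(\ref{def:condition:glue encoding}) of Definition~\ref{def:assemble a layer function} appear along $c$ (and, translated, along $c+\vec v$), ordered by $c$-order. Each tile placement inside the slab $\mathrm{RHS}(c)\cap \mathrm{LHS}(c+\vec v)$ will contribute one section of the railway circuit.

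First I would fix a single ordering of tile positions $z_1,\ldots,z_m$ in the slab, taken from the assembly sequence $\vec\alpha$ supplied by condition~(\ref{def:condition:niceseq}). Since $\calT_x$ is directed and Definition~\ref{def:cleanly} pins down the fan-in/fan-out and $\epsilon$-glue pattern at each position, the underlying set of tile positions and a valid topological ordering can be chosen independently of the input $x$---this is essential since the target railway circuit must be a single static object. I would then define intermediate glue curves $c=c_0,c_1,\ldots,c_m=c+\vec v$, where $c_i$ is obtained from $c_{i-1}$ by locally sweeping the curve past the tile at $z_i$, replacing segments along its input sides by segments along its output sides. Using Definition~\ref{def:cleanly}, each $c_i$ carries exactly $n$ bit-encoding glues when traced against the assembly consisting of the first $i$ placements, since bit-encoding inputs and outputs are exchanged in equal numbers $k_i\in\{0,1,2\}$.

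For each step $i$ I would create one section of the circuit, carrying a single gate of fan-in and fan-out $k_i$. The gate function $f_{g_i}$ sends the input bit pattern on the input sides of $z_i$ to the output bit pattern of the unique tile type that attaches at $z_i$ under that input pattern; well-definedness follows from directedness, and Definition~\ref{def:cleanly} ensures the function does not depend on $x$. Steps with $k_i=0$ (tiles whose only active glues are $\epsilon$-glues) contribute no change to any wire and can simply be omitted.

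The main obstacle, and the step I would spend the most care on, is showing that the $k_i$ input bits (and the $k_i$ output bits) occupy \emph{adjacent} wire positions, so that the gate may be placed as a contiguous block $(i,i_{\mathrm{lo}},i_{\mathrm{hi}},f_{g_i})$ along the wires. Geometrically the curve $c_{i-1}$ hugs the yet-to-be-placed tile at $z_i$: it enters on one input side, traverses empty grid edges or $\epsilon$-glue sides of $z_i$ in between (none of which contribute to the bit sequence), and exits on the other input side, so the at-most-two input bits appear consecutively in $\gac{\alpha_x}{c_{i-1}}$; the symmetric claim holds for the output bits on $c_i$. Granted this adjacency, locality is immediate since $k_i\leq 2<n$ when $n\geq 3$, and correctness follows because by condition~(\ref{def:condition:glue encoding}) the bit sequence on $c_0$ is $x$ and on $c_m$ is $f(x)$, and each intermediate step applies exactly the gate $f_{g_i}$ to the corresponding wire block, giving $f_{\calC}=f_{g_m}\circ\cdots\circ f_{g_1}=f$.
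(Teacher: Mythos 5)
Your construction follows the same blueprint as the paper's proof: one section per tile position in the slab $\mathrm{RHS}(c)\cap\mathrm{LHS}(c+\vec v)$, one gate per position determined by the (input-independent, via Definition~\ref{def:cleanly}) collection of tile placements there, and sections wired in the order given by the assembly sequence from condition~(\ref{def:condition:niceseq}). The differences are in bookkeeping. The paper does not introduce intermediate glue curves $c_0,\ldots,c_m$; instead, to show one fixed ordering works for every input, it constructs for each $x'$ a rearranged assembly sequence $\vec\beta_{x'}$ visiting the same positions $z_0,\ldots,z_m$ as the reference input and reaching the same terminal assembly, and from that concludes the composed sections compute $f$. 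Your intermediate-curve framing makes explicit a requirement the paper's proof invokes only implicitly, namely that the $k_i\le 2$ bits fed through each gate occupy a \emph{contiguous} block of wires --- railway gates act on discrete intervals $[i,j]$, so this adjacency is genuinely needed to form $f_{g_i}\!\!\upharpoonright$, and your local geometric argument (the swept curve hugs the about-to-be-placed tile, with only a corner or non-bit-encoding sides between its at-most-two input glues) fills in a step the paper glosses over. Conversely, your justification that a single topological ordering is valid for all $x$ is stated more briefly than the paper's $\vec\beta_{x'}$ argument, but rests on the same fact (Definition~\ref{def:cleanly} fixes positions, fanin/fanout, and the $\epsilon$-glue pattern independently of $x$), so the substance matches.
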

\begin{proof}
By Definition~\ref{def:assemble a layer function}, 
let $\tau$ be the temperature, 
$c$ the glue curve, 
$\vec{v}$ the vector,
and for each input $x\in \{0,1\}^n$ 
let $\sigma_x$ be the seed assembly encoding $x$ and let $\calT_x=(T,\sigma_x,\tau)$. 
We will construct a $n$-wire local railway circuit from $T$. 

For each $x$, let $\vec\alpha_x$ be the assembly sequence that satisfies 
Definition~\ref{def:assemble a layer function}(\ref{def:condition:niceseq})
and (\ref{def:condition:tile placements map to gate}).
By directedness, the choice of $\vec{\alpha}_x$ over other assembly sequences is inconsequential since all assembly sequences for $x$ compute the same output $f(x)$ for the layer.
For any $z\in ( \mathrm{RHS}(c) \cap \mathrm{LHS}(c + \vec v) )\subsetneq \Zset$
and for all $x \in \{0,1\}^n$, 
let $P_z$ denote the set of tile placements that appear at position $z$ in the set of assembly sequences $\cup_x (\vec\alpha_x)$.
By Definition~\ref{def:assemble a layer function}(\ref{def:condition:tile placements map to gate}), $P_z$ maps cleanly to a gate, and we let $g_z$ denote that gate. 
Via Definition~\ref{def:cleanly}, there are $k_z \in \{0,1,2\}$ inputs and $k_z$ outputs to $g_z$ (i.e. fanin and fanout are equal to $k_z \leq  2$ for~$g_z$).
In the railway circuit for each gate  $g_z$ we define a section $s_z$; there are $n$ inputs and $n$ outputs to the section, 
$k_z$ of those $n$ are fed through gate $g_z$ and the remaining  $n - k_z$ are pass-through.  
In other words, the $n$-bit function computed by the section is defined by the extension $f_{g_z}\!\!\upharpoonright$ to $n$ bits of the function $f_{g_z}$ computed by gate $g_z$ on $k_z$ bits (see Section~\ref{sec:circuits}). 
The section is local since $k_z < n$. 

It remains to wire the sections together (order them) so that they compute $f$.
Choose any $x \in \{0,1\}^n$, and let $z_0, z_1, \ldots,z_m$ be the sequence of positions 
in the order defined by the canonical assembly sequence $\vec\alpha_x$ as defined earlier in the proof.
Let $x' \in \{0,1\}^n$ where $x' \neq x$ and let $\vec\alpha_x'$ be the canonical assembly sequence for $x'$. 
By Definition~\ref{def:cleanly}, all assembly sequences share the same set of tile positions from $\Zset^2$.   
We will define a new assembly sequence $\vec\beta_x'$, that has $\sigma_x'$ as its first assembly,
fills positions in the order $z_0, z_1, \ldots,z_m$ (we used for $\vec\alpha_x$)
and produces the same terminal assembly as $\vec\alpha_x'$. 
For $z$ in $z_0, z_1, \ldots,z_m$, in the order given, let $p$ be tile placement at position $z$ in $\alpha_x'$ ($p$ exists by Definition~\ref{def:cleanly}),
and we claim we can attach $p$ to the current assembly $\beta$ to get a new assembly $\beta'$. 
Since, the assembly $\alpha$ from $\vec\alpha_x$ that receives the placement at position $z$ to make $\alpha'$ (i.e. $\alpha\rightarrow_\calT \alpha'$), 
has neighbouring positions providing sufficient tile sides (glues) for the placement at $z$, 
and since we are iterating through placement positions in the same order, 
the same is true of~$\beta$.  
Hence $p$ can be placed on $\beta$ to give $\beta'$. 
Since $\vec\alpha_x$, $\vec\alpha_x'$ and $\vec\beta_x'$ all share the same set of positions, 
and since $\vec\alpha_x'$ and $\vec\beta_x'$ share the same tile placement at each shared position, 
the terminal assembly of $\vec\alpha_x'$ and $\vec\beta_x'$ are identical. 
Each position $z_0, z_1, \ldots,z_m$  defines a section $s_{z_0}, s_{z_1}, \ldots, s_{z_m}$, and we wire the sections in the order given.
Since each section is a local railway circuit from $n$ bits to $n$ bits, their composition/concatenation is too. 
By Definition of $T$, $\calB(\gac{\alpha_x'}{(c + \vec v)})= f(x) \in \{0,1\}^n$, and by construction the output of section $s_{z_m}$ is the same value $f(x)$.
 \end{proof}

\begin{theorem}\label{thm:self-assembly:no odd bijection}
Let $n\in \{5,6,\ldots \}$.
There is no tile set $T$ that layer-computes, via Definition~\ref{def:assemble a layer function}, 
an odd bijective $n$-bit function, or a quasi-bijective $n$-bit function, or a 
$2^n$ counter on $n$-bits. 
\end{theorem}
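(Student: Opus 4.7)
The plan is to reduce the self-assembly statement to the circuit results already proven, by combining Lemma~\ref{lem:tiles to local circuit} (which embeds any layer-computing tile set into a local railway circuit) with the limitations on local railway circuits from Lemma~\ref{lem:atom_restricted} and Theorem~\ref{th:main}. I would argue by contradiction: suppose a tile set $T$ layer-computes a forbidden $n$-bit function $f$. Since $n \geq 5 \geq 3$, Lemma~\ref{lem:tiles to local circuit} produces a local $n$-wire railway circuit $\mathcal{C}$ with $f_\mathcal{C} = f$, so any impossibility for $f_\mathcal{C}$ immediately yields impossibility for $T$.

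I would then split into the three cases listed in the statement. For the odd bijection case, decompose $f_\mathcal{C} = f_{k-1}\circ\dots\circ f_0$ into atomic components via Lemma~\ref{lem:pass}. Since $f_\mathcal{C}$ is a bijection, Corollary~\ref{cor:ram}(1) forces each $f_i$ to be a bijection; Lemma~\ref{lem:atom_restricted}(1) then forces each $f_i$ to be an \emph{even} bijection, and iterated application of Corollary~\ref{cor:table} (\textbf{even}\,$\circ$\,\textbf{even}\,$=$\,\textbf{even}) makes $f_\mathcal{C}$ even, contradicting oddness. For the quasi-bijection case, $r(f_\mathcal{C}) = 1$, so Corollary~\ref{cor:ram}(2) restricts each atomic component to $r(f_i) \in \{0,1\}$; but Lemma~\ref{lem:atom_restricted}(2) rules out $r(f_i) = 1$, forcing every $f_i$ to be a bijection. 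Compositions of bijections are bijections, so $f_\mathcal{C}$ would be a bijection with $r(f_\mathcal{C}) = 0$, contradicting $r(f_\mathcal{C}) = 1$.

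For the $2^n$-counter case, I would simply invoke Theorem~\ref{th:main} applied to $\mathcal{C}$, or equivalently observe that Lemma~\ref{lem:bij} splits counter functions into the bijective case (handled by Corollary~\ref{cor:kcycles} plus Case~1) and the quasi-bijective case (handled by Case~2). In each sub-case we obtain a contradiction, completing the proof.

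The argument is essentially a bookkeeping exercise once the hard work of Sections~\ref{sec:circuits}--\ref{sec:main} is in place, so the only real obstacle is verifying the hypotheses of Lemma~\ref{lem:tiles to local circuit} — namely, checking that the layer-computing hypothesis indeed supplies a clean tile-placement-to-gate mapping with matching fanin/fanout strictly less than $n$. This is already packaged into Definitions~\ref{def:cleanly} and~\ref{def:assemble a layer function}, so the reduction to the circuit setting goes through without further work, and the rest follows from the structural results on atomic components and the parity theory of Section~\ref{sec:thf}.
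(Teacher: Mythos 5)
Your proof is correct and follows the same route as the paper's: reduce via Lemma~\ref{lem:tiles to local circuit} to a local railway circuit, then apply the circuit impossibility results of Sections~\ref{sec:thf}--\ref{sec:main}. The only difference is cosmetic — the paper's one-line proof cites Lemma~\ref{lem:atom_restricted} and Theorem~\ref{th:main} and leaves the composition/ramification bookkeeping implicit, whereas you unpack the argument of Theorem~\ref{th:main} inline, which is fine but not a different proof.
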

\begin{proof}
By Lemma~\ref{lem:tiles to local circuit} there is a railway circuit $\calC$ that layer-computes the same function, $f$, as  $T$.
By Theorem~\ref{lem:atom_restricted}, $\calC$ does not compute an odd bijection nor quasi-bijection, and hence (or by Theorem~\ref{th:main})
 $\calC$ does not compute a $2^n$ counter. 
\end{proof}

\subsection{Examples: application of our main self-assembly result}
We illustrate our definitions by applying them to previously studied self-assembly systems. 

\subsubsection{Example: IBC tile sets}

\begin{figure}
\includegraphics[width=\textwidth]{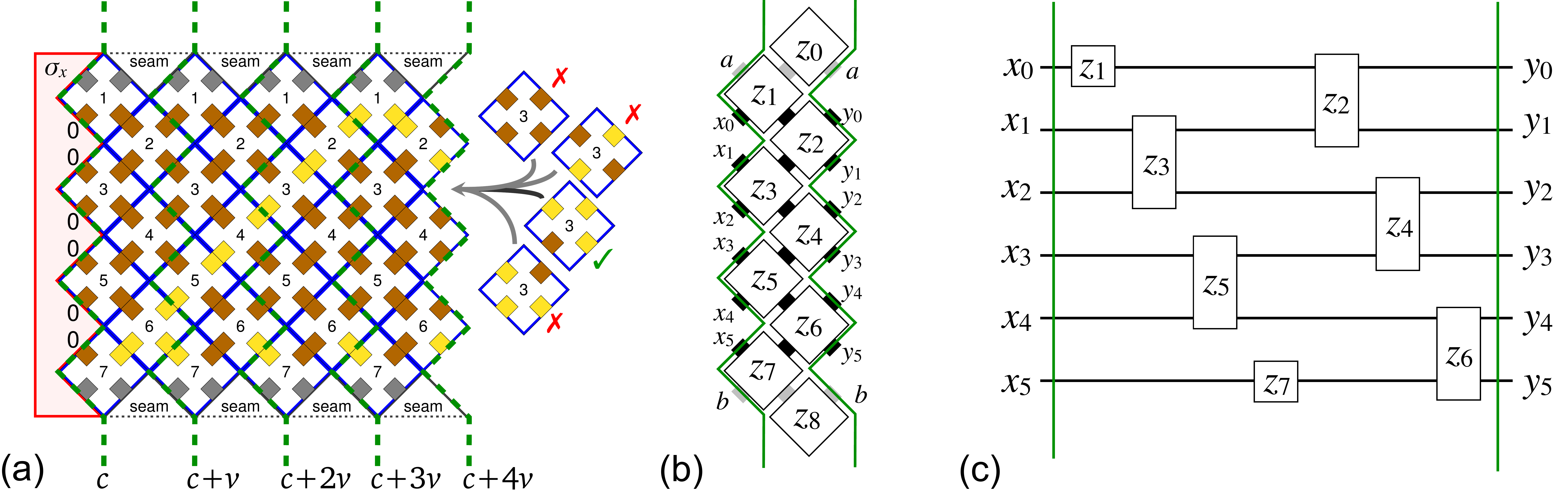}
\caption{Impossibility of a $2^n$ counter in IBCs. The lattice is rotated by $45^\circ$, relative to the standard $\Zset^2$ lattice.
Theorem~\ref{th:main:ibc}, shows that although there are 64 possible bit-strings that appear across the all cuts for all 6-bit systems, no 64-counter is possible in any one system. 
(a) First few layers of a 6-bit 63-counter for the 6-bit IBC tile model that appeared in~\cite{ibc}.
The glue curve $c$, and its translations by multiples of vector $\vec v$ are shown in dashed green.  
(b) Layer for the same system showing tile positions $z_0,z_1,\ldots,z_8 \in \Zset^2$. Glues that encode 0/1 bits are shaded black, $\epsilon$-glues (that do not encode a bit) are shaded grey, for use in Definition~\ref{def:assemble a layer function}. 
(c) Local railway circuit that simulates the IBC via Lemma~\ref{lem:tiles to local circuit}, the existence of that  
that simulation implies the impossibility of a $2^n$-counter.  
The negative result generalises to all even $n$ and any number of tile layers (depth) $\geq 1$.}\label{fig:ibc}
\end{figure}

\begin{example}[IBC tile sets]\label{eg:ibc}
For $n,\ell\in\Nset$, with $n$ even, the directed\footnote{It is possible to have randomised and non-randomised/deterministic/directed  IBCs~\cite{ibc}, here we look only at the deterministic case, meaning that given a seed and tile set, an IBC produces exactly one terminal assembly.} $n$-bit $\ell$-layer IBC model is defined in 
Section SI-A-S1 of~\cite{ibc}, and the  6-bit 1-layer IBC tile assembly model is defined graphically in Figure 1b of the same paper.
Here it is illustrated in Figure~\ref{fig:ibc}.  
The IBC model is a restriction of the aTAM, hence we use the terminology from Section~\ref{sec:sa defs}. 
A directed 6-bit IBC tile set $T$ is a set of 31 tile types with 4 tiles for each of positions $z_2,z_3,\ldots,z_6$ (mapping two input bits to two ouput bits), 
two tiles for each of positions $z_1$ and $z_7$ (mapping one input  bit to one output bit), and one seam tile for each of positions $z_0$ and $z_8$ (mapping no input bits to no output bits -- using $\epsilon$ glues in Definition~\ref{def:glues to bits}). 
The set of tile types associated to a position is unique to that position, hence in an assembly tile types that appear on row $i$ will never appear on row $j \neq i$. 
We let $P_{z_i}$ denote the set of tile placements for position $z_i$, where $i \in \{0,1,2,\ldots,8 \}$, Figure~\ref{fig:ibc}(a) illustrates the 4 tile placements for position $z_3$. 

More generally, for any even $n$, and $\ell \geq 1$, the $n$-bit, $\ell$-layer IBC model is defined in \cite{ibc}.
For the $\ell$-layer case, there are 
$2 \ell$ tiles that map 1 input bit to 1 output bit,  
$(n-1) \ell$ tiles that map 2 input bits to 2 output bits,
and $2\ell$ tiles (or merely $\ell$ tiles if using a tube topology as in \cite{ibc}) that map 0 input bits to 0 output bits.  
\end{example}

\begin{lemma}\label{lem:ibc to railway}
For any even $n \in \{ 2n' \mid n\in \Nset \} $, $\ell \in\Nset^+$, an $n$-bit IBC tile set $T$ layer-computes a Boolean function according to Definition~\ref{def:assemble a layer function}. 
\end{lemma}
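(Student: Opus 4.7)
The plan is to exhibit each ingredient of Definition~\ref{def:assemble a layer function} explicitly from the $n$-bit $\ell$-layer IBC structure of Example~\ref{eg:ibc}, then check conditions (\ref{def:condition:glue encoding})--(\ref{def:condition:tile placements map to gate}) in turn. I will take $\tau=2$, the standard temperature of the IBC tile model. The glue curve $c$ is drawn along tile edges so that it cuts exactly the $n$ bit-encoding glues that form the input to one IBC layer (between the seed-layer boundary and the first layer of gate tiles, as shown dashed green in Figure~\ref{fig:ibc}(a)). The vector $\vec v$ is the fixed horizontal translation that takes one IBC layer to the next in the $45^\circ$-rotated lattice of Figure~\ref{fig:ibc}; consecutive translates $c+i\cdot \vec v$ lie strictly between different pairs of adjacent tile columns and are therefore pairwise disjoint.

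For each $x\in\{0,1\}^n$, the seed $\sigma_x$ is the IBC seed layer with bit-encoding east glues along $c$ chosen so as to encode $x$; by construction all seed tiles lie in $\mathrm{LHS}(c)$. Directedness of $\calT_x=(T,\sigma_x,\tau)$ is immediate from the deterministic (non-randomised) definition of IBC tile sets in~\cite{ibc}, where the glues of each tile are uniquely determined from its matched input sides. Then $\calB(\gac{\sigma_x}{c})=x$ by our choice of $\sigma_x$, and by the semantics of the IBC the glues cut by $c+\vec v$ encode $f(x)$, where $f$ is the $n$-bit function computed by one iteration of the circuit; this gives condition~(\ref{def:condition:glue encoding}).

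Condition~(\ref{def:condition:niceseq}) will be satisfied by any assembly sequence $\vec\alpha_x$ that grows the layer between $c$ and $c+\vec v$ following the natural gate-dependency order sketched in Figure~\ref{fig:ibc}(a): every tile placed in this layer depends only on glues strictly to the left of $c+\vec v$ (coming either from $\sigma_x$ or from a previously placed tile of the same layer), so no tile of $\alpha|_{\mathrm{RHS}(c+\vec v)}$ need be placed during this prefix of the assembly.

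The main step is condition~(\ref{def:condition:tile placements map to gate}). For each position $z$ strictly between $c$ and $c+\vec v$, the set $P_z$ is one of three kinds: the four tiles implementing a $2$-bit-to-$2$-bit gate ($k=2$), the two tiles implementing a $1$-bit-to-$1$-bit edge gate ($k=1$), or the single seam tile carrying only $\epsilon$-glues ($k=0$). In every case Definition~\ref{def:cleanly}(b) holds with $k\in\{0,1,2\}$, and Definition~\ref{def:cleanly}(a) holds because the canonical assembly sequence fills every position of the layer for every input. The step I expect to be the main obstacle is Definition~\ref{def:cleanly}(c): at each $z$, all tile placements in $P_z$ must agree on every $\epsilon$-glue (direction and label). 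This follows by unpacking the IBC construction of~\cite{ibc}, where the non-bit-encoding glues at a given position are a fixed structural ``frame'' shared by every tile type at that position, and the tiles at $z$ differ only in the labels of their bit-encoding glues (as illustrated in Figure~\ref{fig:ibc}(b) by the grey shading of $\epsilon$-glues versus black for bit-encoding glues). Once this agreement is verified, all three conditions of Definition~\ref{def:assemble a layer function} hold, and the lemma follows.
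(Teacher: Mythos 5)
Your proof takes essentially the same route as the paper's: fix $\tau=2$, take $c$ to run along the seed and $\vec v$ to be the per-layer translation, note condition~(\ref{def:condition:glue encoding}) holds by construction of $\sigma_x$ and the IBC semantics, exhibit a left-to-right assembly sequence for condition~(\ref{def:condition:niceseq}), and observe the three position types with $k\in\{0,1,2\}$ matched fanin/fanout for condition~(\ref{def:condition:tile placements map to gate}). You actually elaborate a bit more than the paper on why Definition~\ref{def:cleanly}(c) holds (shared $\epsilon$-glue ``frame'' per position) and on directedness, but these are the same observations the paper leaves implicit.
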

\begin{proof}
IBCs (described in Example~\ref{eg:ibc}, illustrated in in Figure~\ref{fig:ibc}) satisfy Definition~\ref{def:assemble a layer function}: 
In that definition let $\tau = 2$, 
let $c$ run along the seed $\sigma_x$ as shown in Figure~\ref{fig:ibc}, 
and let $\vec v = (0, \ell \sqrt{2})$ (assuming the x-axis is horizontal).
Next, note that Definition~\ref{def:assemble a layer function}(\ref{def:condition:glue encoding}) is satisfied by $c$, $\vec v$, 
$\sigma_x$ for each $x\in \{0,1\}^n$, 
and the fact that the tile set outputs a bit sequence (that we define to be $f(x)$) along $c +\vec v$. 
Definition~\ref{def:assemble a layer function}(\ref{def:condition:niceseq}) is satisfied by the assembly sequence
that places tiles in ``half-layer order'' as follows: $z_1,z_3,\ldots,z_{n+1}$ then $z_0,z_2,\ldots,z_{n+2}$, for layer 1, and so on for all $\ell$ layers.
Definition~\ref{def:assemble a layer function}(\ref{def:condition:tile placements map to gate}):  
for $i \in \{0,1,\ldots, (n+3)\ell -1 \}$, the sets of tile placements $P_{z_i}$ are defined using (a fairly obvious) generalisation of the scheme illustrated in Example~\ref{eg:ibc} and Figure~\ref{fig:ibc}(b) for $n=6$ and $\ell=1$.
Each such placement maps cleanly to a railway circuit gate since IBC tile positions each have an associated set of $2^k$ tile types that each map $k\in\{0,1,2\}$ input bits to $k$ output bits, in particular, fanin is equal to fanout for each gate, and $< n$ hence for $n\geq 3$ the railway circuit is local. 
 \end{proof}
 Thus, via Lemma~\ref{lem:tiles to local circuit}, 
 $n$-bit IBC tile sets are simulated by $n$-wire local railway circuits (Figure~\ref{fig:ibc}(c) shows an example).
 Hence by Theorem~\ref{thm:self-assembly:no odd bijection} 
 we immediately get:\thmainibc*

\subsubsection{Example: zig-zig tile sets}\label{sec:zigzig}
\begin{figure}[t]
\includegraphics[width=\textwidth]{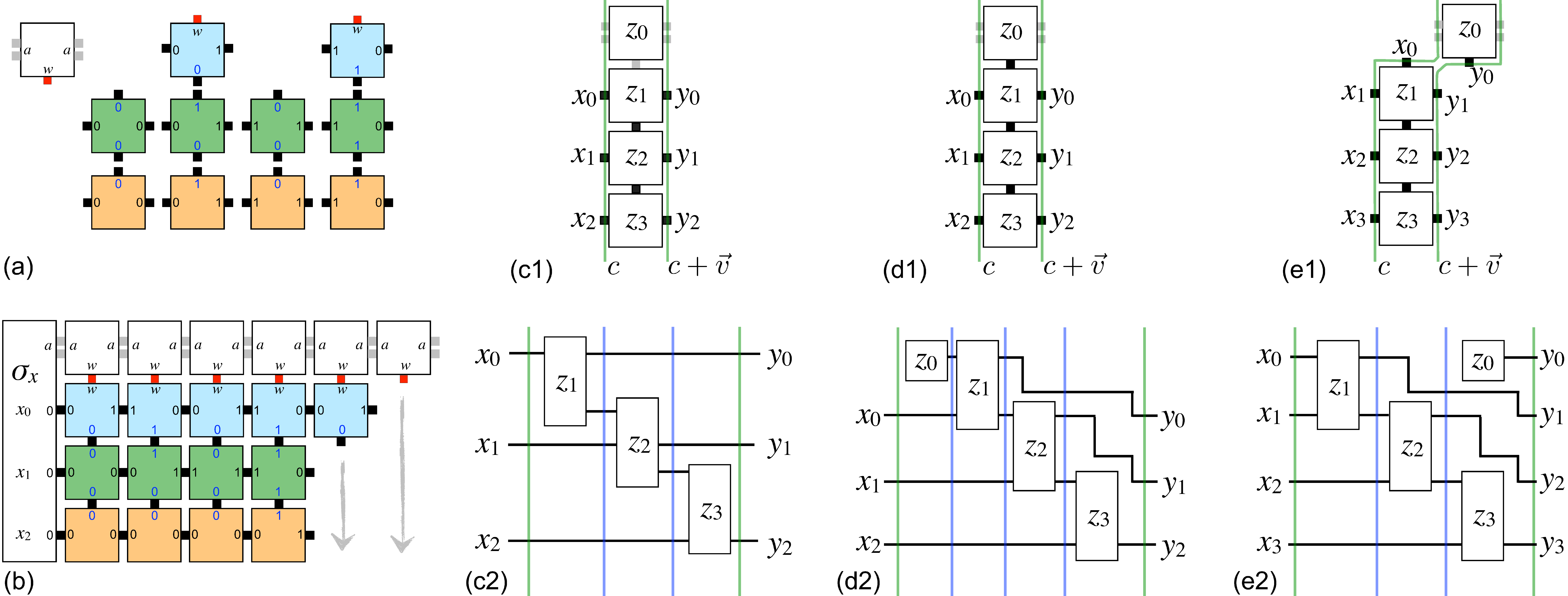}
\caption{Impossibility and possibility of  a $2^n$ zig-zig counter in the aTAM.
(a) An aTAM tile set.
Glues that encode 0/1 bits are shaded black, $\epsilon$-glues (that do not encode a bit) are shaded grey. 
Red denotes a glue type $w$ that can be either be a 0/1 encoding glue (assembles a $2^{n-1}$ counter -- non-maximal), 
or an $\epsilon$-glue (assembles a $2^{n}$ counter -- maximal). 
Each row has a unique set of tile types, indicated by tile colour. 
(b)~Example growth starting from a seed assembly $\sigma_x$ that encodes the input $x = x_0 x_1 x_2 = 000$. 
(c1)~A layer defined by the curve $c$ and its translation $c+\vec v$, both in green, and (c2) its simulation by a circuit. 
The circuit is not a railway circuit since gates $z_1$ and $z_2$ have fanout unequal to fanin; 
this can be seen by counting the number of wires that intersect each section border (3 wires on the green borders, 4 on the blue).  
The layer in (c1) and the circuit in (c2) define a maximal $2^n$-counter for $n=3$, 
and the construction generalises to give a maximal $2^n$-counter for any $n \in \Nset$).
Likewise, (d1) and (d2) define a $2^n$ counter by exploiting unequal fanin and fanout on some gates: 
in particular, in (d1) we've chosen the $w$ glue to be a 0/1-encoding glue and the resulting circuit in (d2) is not a railway circuit 
(the green section borders intersect 3 wires, the blue intersect 4 wires). 
Finally, in (e1) we define $c$ and $\vec v$ in a way that gives a $2^{n-1}$ counter 
(since setting $x_0=1$ enables counting on 3 bits, and setting $x_0=0$ does not help -- forces the other bits to merely be copied). In this case the resulting circuit in (e2) is not a valid railway circuit (neither maximality nor application of our main result).}\label{fig:zigzig}
\end{figure}

Figure~\ref{fig:zigzig} illustrates a simple ``zig-zig'' counter system, where each column of tiles increments an $n$-bit binary input, for $n=3$. 
By repeating the rows of green tiles (either by using the same title types or hardcoding rows) the system generalises to arbitrary $n \in \Nset$.   

Our main self-assembly result does not apply to zig-zig systems. 
Figure~\ref{fig:zigzig}(c1), (d1) and (e1) show a number of choices for 0/1-encoding glues, versus $\epsilon$-glues, as well as two choices for the curve $c$, and in the three cases our attempt to construct a railway circuit fails. 
The circuit (and tile types) exploit unequal gate fanin and gate fanout, hence some positions do not map cleanly to a gate hence Definition~\ref{def:assemble a layer function} does not apply. 
Furthermore, it can be seen that for any $n\in\Nset^+$ a maximal $2^n$ counter is achieved (shown for $n=3$ in Figure~\ref{fig:zigzig}).

\subsubsection{Example: zig-zag tile sets}\label{sec:zigzag}

\begin{figure}[h!]
\includegraphics[width=\textwidth]{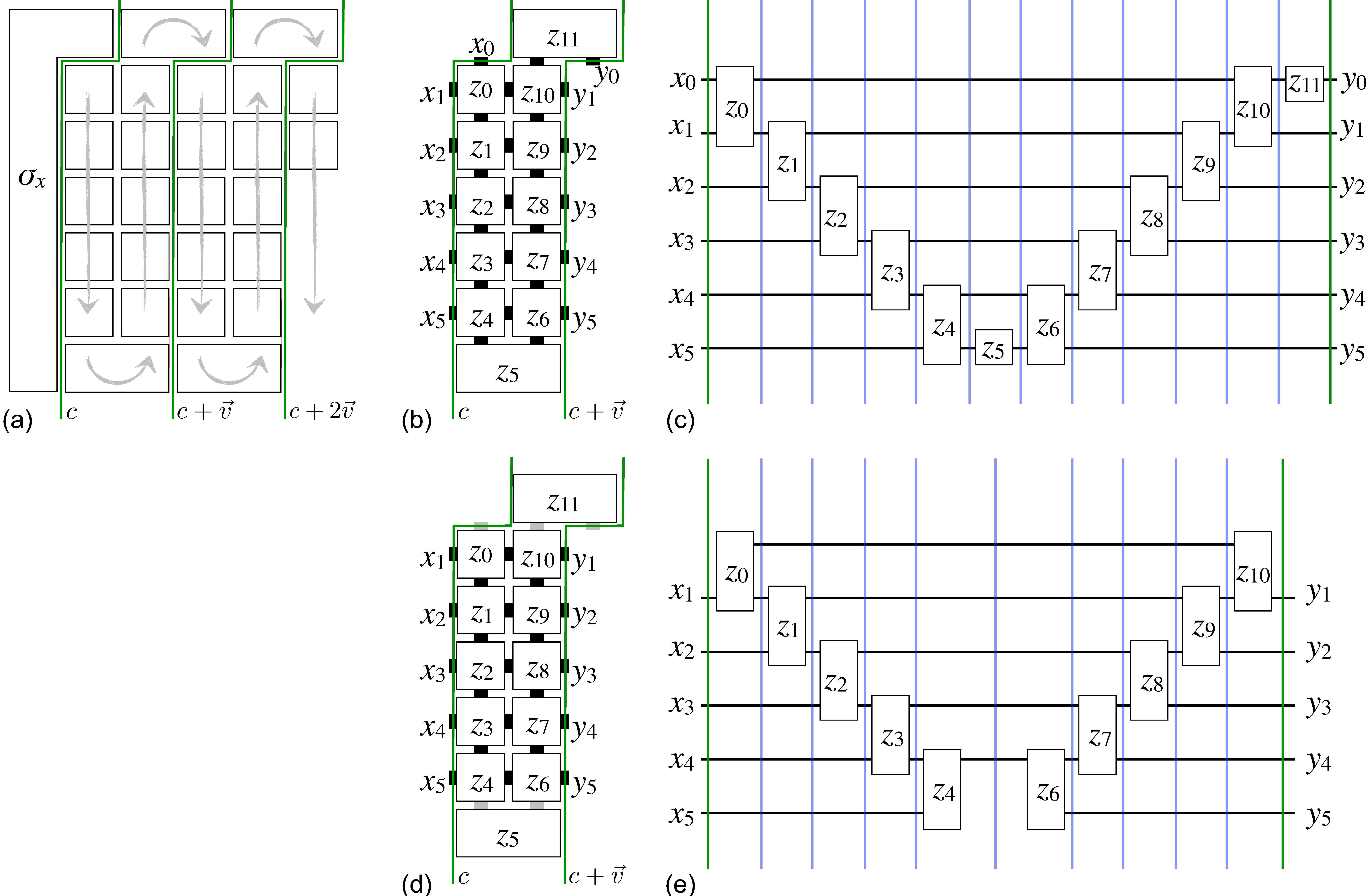}
\caption{Zig-zag tile assembly system, similar to Evans~\cite{evans2014crystals}. (a)~Schematic of a zig-zag system showing the seed $\sigma_x$ and arrows indicating tile attachment order. 
The first column of tiles (``zig'') implements a binary increment, using tile types similar to those in Figure~\ref{fig:zigzig}(a),
the second column of tiles (``zag'')  copies a columns of input bits to the right.    
The choices for the curve $c$ (in green) and vector $\vec v = (2,0)$ for Definition~\ref{def:assemble a layer function} are shown. 
(b)~A layer consists of a single zig (tiles at positions $z_0,\ldots,z_5$) followed by a zag (tiles at position $z_6,\ldots,z_{11}$,). 
Intuitively, because some glues output constant bits (e.g. $x_0=y_0=1$ we are free to choose whether 
those glues should be 0/1-encoding glues (black), and $\epsilon$-glues (grey), in (b) we have all glues be 
0/1-encoding which leads to a valid 
railway circuit in (c) that simulates the tile layer in (b).
Specifically, in (c) all gates having fanin and fanout that is equal; in other words the functions 
from one green/blue cut to the next green/blue cut are all from $n$ bits to $n$ bits.
Moreover, no gate spans all 6 wires hence the railway circuit is local.
Hence our main theorem applies that this system does not implement a $2^n$ counter on $n=6$ bits (it does however, implement a $2^{n-1}$-counter on $n=6$ bits).
(d)~If we instead assume that $x_0, y_0$ are $\epsilon$-glues, we get a maximal $2^n$ counter (but on only $n=5$ bits). 
In (e) the resulting circuit is not a railway circuit since some gates $z_0, z_4, z_6, z_{10}$ have unequal fanin and fanout.
Unequal fanin and fanout means that the function from blue/green cut to blue/green cut are not all on a fixed number $n$ bits, hence our techniques do not apply. 
}\label{fig:zigzag}
\end{figure}

Figure~\ref{fig:zigzag}(a) illustrates an aTAM schematic of a ``zig-zag'' counter system  
that was implemented experimentally in~\cite{evans2014crystals}. 
The system has alternating  increment (``zig'') and copy (``zag'') columns. 
The increment columns use similar tiles to those shown in Figure~\ref{fig:zigzig}(a). 
If we fix $n$ (e.g.\ for Figure~\ref{fig:zigzig}, let $n=6$), and vary our interpretation of the glues as either 0/1-encoding or $\epsilon$-glues, 
the counter can be seen to implement a non-maximal $2^{n-1}$ counter on $n$ bits, or a maximal $2^n$ counter on $n-1$ bits. 

Specifically, within each zig-zag layer, Figure~\ref{fig:zigzag}(b) interprets glue positions shown in black (e.g.~$x_0$ and $y_0$) as encoding a bit,
and in this case the system meets Definition~\ref{def:assemble a layer function}, and via the proof of Lemma~\ref{lem:tiles to local circuit} we get the railway circuit shown in Figure~\ref{fig:zigzag}(c). 
Hence, with that glue interpretation a $2^n$ counter is impossible (Theorem~\ref{thm:self-assembly:no odd bijection}). 
Further intuition be obtained due from the tile set design: 
in Figure~\ref{fig:zigzig}(a) the bit $x_0 = y_0$ and is always 1, and an analysis of the tile set shows that we get a $2^{n-1}$-counter. 

If we instead, use the interpretation in Figure~\ref{fig:zigzag}(d) interprets several of the glue positions (e.g.~$x_0$ and $y_0$) as not encoding a bit, and instead being $\epsilon$-glues. 
In this case our attempt to apply Definition~\ref{def:assemble a layer function} fails as some tile positions map to Boolean gates with fanin unequal to fanout; see
for example gates $z_0,z_4,z_6,z_{10}$ in Figure~\ref{fig:zigzag}(e). 
Hence our techniques do not apply. 
An analysis of the tile set shows that we get a maximal counter (but on one fewer bit that the sub-maximal counter above). 
This example shows that a system that sticks to our formalism, except for the fanin/fanout criteria, may exhibit sufficient expressive capabilities to achieve a maximal counter.

\bibliography{cpibc-main}

\begin{thebibliography}{10}

\bibitem{AdChGoHu01}
Leonard Adleman, Qi~Cheng, Ashish Goel, and Ming-Deh Huang.
\newblock Running time and program size for self-assembled squares.
\newblock In {\em STOC: Proceedings of the 33rd Annual ACM Symposium on Theory
  of Computing}, pages 740--748, Hersonissos, Greece, 2001.
\newblock \href {http://dx.doi.org/http://doi.acm.org/10.1145/380752.380881}
  {\path{doi:http://doi.acm.org/10.1145/380752.380881}}.

\bibitem{barish2005two}
Robert~D Barish, Paul W~K Rothemund, and Erik Winfree.
\newblock Two computational primitives for algorithmic self-assembly: Copying
  and counting.
\newblock {\em Nano letters}, 5(12):2586--2592, 2005.

\bibitem{barish2009information}
Robert~D Barish, Rebecca Schulman, Paul W~K Rothemund, and Erik Winfree.
\newblock An information-bearing seed for nucleating algorithmic self-assembly.
\newblock {\em Proceedings of the National Academy of Sciences},
  106(15):6054--6059, 2009.

\bibitem{beckerCA}
F.~Becker, Ivan Rapaport, and E.~Rémila.
\newblock Self-assembling classes of shapes with a minimum number of tiles, and
  in optimal time.
\newblock {\em LNCS}, 4337:45--56, 01 2006.

\bibitem{bergeron_labelle_leroux_1997}
François Bergeron, Gilbert Labelle, and Pierre Leroux.
\newblock {\em Combinatorial Species and Tree-like Structures}.
\newblock Encyclopedia of Mathematics and its Applications. Cambridge
  University Press, 1997.
\newblock \href {http://dx.doi.org/10.1017/CBO9781107325913}
  {\path{doi:10.1017/CBO9781107325913}}.

\bibitem{DBLP:conf/rc/BoykettKS16}
Tim Boykett, Jarkko Kari, and Ville Salo.
\newblock Strongly universal reversible gate sets.
\newblock In {\em Reversible Computation - 8th International Conference, {RC}
  2016, Bologna, Italy, July 7-8, 2016, Proceedings}, pages 239--254, 2016.
\newblock URL: \url{https://doi.org/10.1007/978-3-319-40578-0\_18}, \href
  {http://dx.doi.org/10.1007/978-3-319-40578-0\_18}
  {\path{doi:10.1007/978-3-319-40578-0\_18}}.

\bibitem{Boykett2016strongly}
Tim Boykett, Jarkko Kari, and Ville Salo.
\newblock Strongly universal reversible gate sets.
\newblock In {\em International Conference on Reversible Computation}, pages
  239--254. Springer, 2016.

\bibitem{CGM04}
Qi~Cheng, Ashish Goel, and Pablo~Moisset de~Espan\'{e}s.
\newblock Optimal self-assembly of counters at temperature two.
\newblock In {\em Proceedings of the First Conference on Foundations of
  Nanoscience: Self-assembled Architectures and Devices}, 2004.

\bibitem{DBLP:conf/dna/CookRW03}
Matthew Cook, Paul W.~K. Rothemund, and Erik Winfree.
\newblock Self-assembled circuit patterns.
\newblock In Junghuei Chen and John~H. Reif, editors, {\em {DNA} Computing, 9th
  International Workshop on {DNA} Based Computers, DNA9, Madison, WI, USA, June
  1-3, 2003, revised Papers}, volume 2943 of {\em Lecture Notes in Computer
  Science}, pages 91--107. Springer, 2003.
\newblock URL: \url{https://doi.org/10.1007/978-3-540-24628-2\_11}, \href
  {http://dx.doi.org/10.1007/978-3-540-24628-2\_11}
  {\path{doi:10.1007/978-3-540-24628-2\_11}}.

\bibitem{2HAMIU}
Erik~D. Demaine, Matthew~J. Patitz, Trent~A. Rogers, Robert~T. Schweller,
  Scott~M. Summers, and Damien Woods.
\newblock The two-handed tile assembly model is not intrinsically universal.
\newblock In {\em ICALP: Proceedings of the 40th International Colloquium on
  Automata, Languages, and Programming}, volume 7965 of {\em LNCS}, pages
  400--412. Springer, July 2013.
\newblock Arxiv preprint: \href{http://arxiv.org/abs/1306.6710}{\tt
  arXiv:1306.6710}.

\bibitem{DotCACM}
David Doty.
\newblock Theory of algorithmic self-assembly.
\newblock {\em Communications of the ACM}, 55(12):78--88, 2012.

\bibitem{USA}
David Doty, Jack~H. Lutz, Matthew~J. Patitz, Scott~M. Summers, and Damien
  Woods.
\newblock Intrinsic universality in self-assembly.
\newblock In {\em STACS: Proceedings of the 27th International Symposium on
  Theoretical Aspects of Computer Science}, pages 275--286, 2009.
\newblock Arxiv preprint: \href{http://arxiv.org/abs/1001.0208}{\tt
  arXiv:1001.0208}.

\bibitem{DotPatReiSchSum10}
David Doty, Matthew~J. Patitz, Dustin Reishus, Robert~T. Schweller, and
  Scott~M. Summers.
\newblock Strong fault-tolerance for self-assembly with fuzzy temperature.
\newblock In {\em FOCS 2010: Proceedings of the 51st Annual IEEE Symposium on
  Foundations of Computer Science}, pages 417--426. IEEE, 2010.

\bibitem{tbn}
David Doty, Trent~A. Rogers, David Soloveichik, Chris Thachuk, and Damien
  Woods.
\newblock Thermodynamic binding networks.
\newblock In {\em DNA 2017: Proceedings of the 23rd International Meeting on
  DNA Computing and Molecular Programming}, volume 10467 of {\em LNCS}, pages
  249--266, 2017.
\newblock Arxiv preprint
  \href{http://arxiv.org/abs/1709.07922}{\texttt{arXiv:1709.07922}}.

\bibitem{evans2014crystals}
Constantine Evans.
\newblock {\em Crystals that count! {P}hysical principles and experimental
  investigations of {{DNA}} tile self-assembly}.
\newblock PhD thesis, Caltech, 2014.

\bibitem{10.5555/1506267}
Philippe Flajolet and Robert Sedgewick.
\newblock {\em Analytic Combinatorics}.
\newblock Cambridge University Press, USA, 1 edition, 2009.

\bibitem{fu2012self}
Bin Fu, Matthew~J Patitz, Robert~T Schweller, and Robert Sheline.
\newblock Self-assembly with geometric tiles.
\newblock In {\em ICALP: International Colloquium on Automata, Languages, and
  Programming}, pages 714--725. Springer, 2012.

\bibitem{Joyal1981UneTC}
Andr{\'e} Joyal.
\newblock Une th{\'e}orie combinatoire des s{\'e}ries formelles.
\newblock 1981.

\bibitem{paths2020}
Pierre-Étienne Meunier, Damien Regnault, and Damien Woods.
\newblock The program-size complexity of self-assembled paths.
\newblock In {\em STOC: Proceedings of the 52nd Annual ACM SIGACT Symposium on
  Theory of Computing}. ACM, 2020.
\newblock Accepted. \href{http://arxiv.org/abs/2002.04012}{Arxiv preprint:
  \texttt{2002.04012}}.

\bibitem{PatitzSurveyJournal}
Matthew~J. Patitz.
\newblock An introduction to tile-based self-assembly and a survey of recent
  results.
\newblock {\em Natural Computing}, 13(2):195--224, 2014.

\bibitem{patitz2018resiliency}
Matthew~J Patitz, Robert Schweller, Trent~A Rogers, Scott~M Summers, and Andrew
  Winslow.
\newblock Resiliency to multiple nucleation in temperature-1 self-assembly.
\newblock {\em Natural Computing}, 17(1):31--46, 2018.

\bibitem{patitz2011exact}
Matthew~J Patitz, Robert~T Schweller, and Scott~M Summers.
\newblock Exact shapes and {Turing} universality at temperature 1 with a single
  negative glue.
\newblock In {\em International Conference on DNA-Computing and Molecular
  Programming}, pages 175--189. Springer, 2011.

\bibitem{rothemund2000program}
Paul W~K Rothemund and Erik Winfree.
\newblock The program-size complexity of self-assembled squares.
\newblock In {\em STOC: Proceedings of the thirty-second annual ACM symposium
  on Theory of computing}, pages 459--468. ACM, 2000.

\bibitem{rotman2012introduction}
Joseph~J Rotman.
\newblock {\em An introduction to the theory of groups}, volume 148.
\newblock Springer Science \& Business Media, 2012.

\bibitem{SolWin07}
David Soloveichik and Erik Winfree.
\newblock Complexity of self-assembled shapes.
\newblock {\em SIAM Journal on Computing}, 36(6):1544--1569, 2007.

\bibitem{DBLP:conf/icalp/Toffoli80}
Tommaso Toffoli.
\newblock Reversible computing.
\newblock In {\em Automata, Languages and Programming, 7th Colloquium,
  Noordweijkerhout, The Netherlands, July 14-18, 1980, Proceedings}, pages
  632--644, 1980.
\newblock URL: \url{https://doi.org/10.1007/3-540-10003-2\_104}, \href
  {http://dx.doi.org/10.1007/3-540-10003-2\_104}
  {\path{doi:10.1007/3-540-10003-2\_104}}.

\bibitem{Winf98}
Erik Winfree.
\newblock {\em Algorithmic Self-Assembly of {D}{N}{A}}.
\newblock PhD thesis, California Institute of Technology, June 1998.

\bibitem{nubots}
Damien Woods, Ho-Lin Chen, Scott Goodfriend, Nadine Dabby, Erik Winfree, and
  Peng Yin.
\newblock Active self-assembly of algorithmic shapes and patterns in
  polylogarithmic time.
\newblock In {\em ITCS: Proceedings of the 4th conference on Innovations in
  Theoretical Computer Science}, pages 353--354. ACM, 2013.
\newblock Arxiv preprint
  \href{http://arxiv.org/abs/1301.2626}{\texttt{arXiv:1301.2626}} [cs.DS].
\newblock \href {http://dx.doi.org/10.1145/2422436.2422476}
  {\path{doi:10.1145/2422436.2422476}}.

\bibitem{ibc}
Damien Woods, David Doty, Cameron Myhrvold, Joy Hui, Felix Zhou, Peng Yin, and
  Erik Winfree.
\newblock Diverse and robust molecular algorithms using reprogrammable {DNA}
  self-assembly.
\newblock {\em Nature}, 567(7748):366--372, 2019.

\bibitem{DBLP:journals/corr/Xu15e}
Siyao Xu.
\newblock Reversible logic synthesis with minimal usage of ancilla bits.
\newblock {\em CoRR}, abs/1506.03777, 2015.
\newblock URL: \url{http://arxiv.org/abs/1506.03777}, \href
  {http://arxiv.org/abs/1506.03777} {\path{arXiv:1506.03777}}.

\end{thebibliography}

\appendix
\newpage
\section{Parity of a bijection}
\label{app:proof_parity}

For the sake of completeness we prove  Theorems~\ref{th:parity} and \ref{th:kcycles} from Section~\ref{sec:thf}:

\thparity*
\thkcycles*

These are known group theoretical results and the literature offers a lot of different proofs\footnote{See this thread:\\ \scriptsize{\url{https://math.stackexchange.com/questions/46403/alternative-proof-that-the-parity-of-permutation-is-well-defined}}} for them \cite{rotman2012introduction}. The following lemma is crucial:

\begin{lemma}[Sign of a bijection]
\label{lem:sign}
Let $f\in\mathfrak{S}_m$. Define $\epsilon: \mathfrak{S}_m \to \{-1,1\}$, the sign of $f$, to be:
$$ \epsilon(f) = \frac{\prod_{0 \leq j < i < m} (j-i)}{ \prod_{0 \leq j < i < m} (f(j)-f(i))}$$
We have:
\begin{enumerate}
\item $\epsilon(f) = 1$ or $\epsilon(f) = -1$
\item Let $g\in\mathfrak{S}_m$, then $\epsilon(f\circ g) = \epsilon(f)\epsilon(g)$
\item Let $\tau$ be a swap then $\epsilon(\tau) = -1$
\end{enumerate}
\end{lemma}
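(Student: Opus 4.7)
The plan is to verify the three claims in sequence, each by an elementary manipulation of the defining product. All three rest on the observation that the unordered pair $\{j,i\}$ with $j\neq i$ in $\{0,\ldots,m-1\}$ is the natural indexing object for $\epsilon$, and that $f$ being a bijection lets us reindex such unordered pairs through $f$.

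For part (1), I would observe that $(f(j)-f(i))$ has absolute value $|f(j)-f(i)|$, and as $\{j,i\}$ ranges over all two-element subsets of $\{0,\ldots,m-1\}$, the bijectivity of $f$ means that $\{f(j),f(i)\}$ ranges over the same collection of two-element subsets exactly once. Hence the multisets $\{|j-i| : j<i\}$ and $\{|f(j)-f(i)| : j<i\}$ coincide, so the absolute values of the numerator and denominator agree, which forces $\epsilon(f)\in\{-1,+1\}$.

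For part (2), I would factor $\epsilon(f\circ g) = \epsilon(g)\cdot R$ where $R = \prod_{j<i}(g(j)-g(i))/\prod_{j<i}(f(g(j))-f(g(i)))$, and show $R=\epsilon(f)$. The key point is that each individual factor $(a-b)/(f(a)-f(b))$ is invariant under swapping $a\leftrightarrow b$, since both numerator and denominator change sign simultaneously. So $R$ can be written as a product over the unordered pairs $\{g(j),g(i)\}$, and because $g$ is a bijection these are in one-to-one correspondence with the unordered pairs $\{j',i'\}$ in $\{0,\ldots,m-1\}$. Re-selecting the ordered representative with $j'<i'$ in each pair recovers exactly the formula for $\epsilon(f)$.

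For part (3), I would compute directly. Let $\tau$ swap $i_0<i_1$ and fix all other points. I would split the pairs $(j,i)$ with $j<i$ into four groups: (i) those disjoint from $\{i_0,i_1\}$, which contribute factor $1$ since $\tau$ fixes both coordinates; (ii) those meeting $\{i_0,i_1\}$ only at $i_0$; (iii) those meeting $\{i_0,i_1\}$ only at $i_1$; and (iv) the single pair $(i_0,i_1)$. For each $k\notin\{i_0,i_1\}$ I would pair the factor from group (ii) (involving $k$ and $i_0$) with the factor from group (iii) (involving $k$ and $i_1$); the product of the two collapses to $1$ by direct cancellation (handled in the three sub-cases $k<i_0$, $i_0<k<i_1$, $k>i_1$ according to the orderings that determine which coordinate is smaller). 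The leftover pair $(i_0,i_1)$ contributes $(i_0-i_1)/(\tau(i_0)-\tau(i_1))=(i_0-i_1)/(i_1-i_0)=-1$, giving $\epsilon(\tau)=-1$.

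The only subtle point is the reindexing in (2): one must recognise the symmetry $(a-b)/(f(a)-f(b)) = (b-a)/(f(b)-f(a))$ before swapping the product over ordered pairs $(g(j),g(i))$ to a product over ordered pairs $(j',i')$ with $j'<i'$; otherwise one risks introducing a spurious sign. The computations in (1) and (3) are routine once the pairing structure is set up correctly.
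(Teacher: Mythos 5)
Your proposal is correct and follows essentially the same route as the paper: part (1) by reindexing unordered pairs via bijectivity, part (2) by factoring $\epsilon(f\circ g)=\epsilon(g)\cdot R$ and reindexing $R$ through $g$, and part (3) by a direct case analysis on the position of $k$ relative to $i_0,i_1$. Your treatment of part (2) is in fact slightly more careful than the paper's, since you explicitly note that each factor $(a-b)/(f(a)-f(b))$ is invariant under $a\leftrightarrow b$ before reindexing over unordered pairs, a detail the paper leaves implicit.
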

\begin{proof}
\begin{enumerate}
\item Because $f$ is a bijection, the sets $\{\{j,i\} \, | \, 0 \leq j < i < m\}$ and $\{\{f(j),f(i)\} \, | \, 0 \leq j < i < m\}$ are the same. However, the sets of ordered pairs $\{(j,i) \, | \, 0 \leq j < i < m\}$, $\{(f(j),f(i)) \, | \, 0 \leq j < i < m\}$ might differ when $f(j) > f(i)$, i.e when $f$ reverses the order of $(j,i)$. Hence $\epsilon(f) = 1$ if $f$ reverses the order an even number of times and $\epsilon(f) = -1$ if $f$ reverses the order an odd number of times.
\item We have $$\epsilon(f\circ g) = \frac{\prod_{0 \leq j < i < m} (j-i)}{ \prod_{0 \leq j < i < m} (f\circ g(j)-f\circ g(i))} = \frac{\prod_{0 \leq j < i < m} (j-i)}{ \prod_{0 \leq j < i < m} ( g(j)-g(i))} \frac{\prod_{0 \leq j < i < m} ( g(j)-g(i))}{ \prod_{0 \leq j < i < m} (f\circ g(j)-f\circ g(i))} $$
But because $g$ is a bijection, we have $$\frac{\prod_{0 \leq j < i < m} ( g(j)-g(i))}{ \prod_{0 \leq j < i < m} (f\circ g(j)-f\circ g(i))} = \frac{\prod_{0 \leq j < i < m} (j-i)}{ \prod_{0 \leq j < i < m} (f(j)-f(i))}$$
So we have: $$\frac{\prod_{0 \leq j < i < m} (j-i)}{ \prod_{0 \leq j < i < m} (f\circ g(j)-f\circ g(i))} = \frac{\prod_{0 \leq j < i < m} (j-i)}{ \prod_{0 \leq j < i < m} ( g(j)-g(i))} \frac{\prod_{0 \leq j < i < m} (j-i)}{ \prod_{0 \leq j < i < m} (f(j)-f(i))}$$
In other words: $\epsilon(f\circ g) = \epsilon(f)\circ\epsilon(g)$.

\item Let's consider a swap $\tau$ which swaps $i_0$ and $i_1$ with $i_0 < i_1$. We have $\frac{i_0 - i_1}{i_1 - i_0} = -1$. We juste need to focus on un-ordered pairs that features $i_0$ or $i_1$ since $\tau$ leaves all other elements unchanged. Now, three cases: 
\begin{enumerate}
\item Let's consider $i_2$ such that $i_2 < i_0 < i_1$. We have $\frac{i_2 - i_0}{i_2 - i_1} = 1$. We also have $\frac{i_2 - i_1}{i_2 - i_0} = 1$.
\item Let's consider $i_2$ such that $i_0 < i_2 < i_1$. We have $\frac{i_0 - i_2}{i_1 - i_2} = -1$ but we also have $\frac{i_2 - i_1}{i_2 - i_0} = -1$ which compensates.
\item Let's consider $i_2$ such that $i_0 < i_1 < i_2$. We have $\frac{i_0 - i_2}{i_1 - i_2} = 1$. We also have $\frac{i_1 - i_2}{i_0 - i_2} = 1$.
\end{enumerate}

In all those cases, the sign is not affected: either it is compensated either it is positive. The only part of $\epsilon(f)$ with a negative sign which is not compensated is $\frac{i_0 - i_1}{i_1 - i_0} = -1$. Hence $\epsilon(\tau) = -1$. We gave the proof for $i_0 < i_1$, the argument is symmetric and can be adapted to the case $i_1 < i_0$.\qedhere
\end{enumerate}
\end{proof}
\begin{remark}
Without saying it we proved that $\epsilon: \mathfrak{S}_m \to \{-1,1\}$ is a \textbf{group morphism} between groups $(\mathfrak{S}_m,\circ)$ and $(\{-1,1\},\times)$. In fact, it is the only non-trivial one (i.e. not the identity), see \cite{rotman2012introduction}.
\end{remark}

Theorem~\ref{th:parity} becomes a piece of cake:

\thparity*
\begin{proof}
Let $f\in\mathfrak{S}_m$ and $f= \tau_{p-1}\circ \dots \circ \tau_0$ and $f=\tau'_{p'-1}\circ \dots \circ \tau'_0$. By Lemma~\ref{lem:sign} we have:
\begin{align*}
\epsilon(f) &= \epsilon(\tau_{p-1}\circ \dots \circ \tau_0) = \epsilon(\tau_{p-1}) \times \dots \times \epsilon(\tau_0) = (-1)^p\\
\epsilon(f) &= \epsilon(\tau'_{p'-1}\circ \dots \circ \tau'_0) = \epsilon(\tau'_{p'-1}) \times \dots \times \epsilon(\tau'_0) = (-1)^{p'}
\end{align*}
Hence we must have $p \equiv p' \; [2]$. 
\end{proof}
\begin{remark}
The bijection $f$ is even if $\epsilon(f) = 1$ and odd if $\epsilon(f) = -1$.
\end{remark}
Now we compute the parity of a $k$-cycle:
\defkcycles*
\thkcycles*
\begin{proof}
Let $\rho$ be a $k$-cycle acting on $x_0,x_1,\dots,x_{k-1}$. One can decompose $\rho$ is $k-1$ transposition: $\rho = \tau_{x_0,x_1} \circ \tau_{x_1,x_2} \circ \dots \circ \tau_{x_{k-2},x_{k-1}}$. Where $\tau_{x_i,x_j}$ swaps $i$ and $j$. Hence $\epsilon(\rho) = (-1)^{k-1}$ and $\rho$ is even iff $k-1$ is even.
\end{proof}

\end{document}